\newtheorem{theorem}{Theorem}[section]
\newtheorem{lemma}[theorem]{Lemma}
\newtheorem{definition}[theorem]{Definition}
\newtheorem{remark}[theorem]{Remark}
\newcommand{\EQ}{Belief-Independent Sequential Equilibrium}
\title{Distributed MIS Algorithms for Rational Agents using Games}
\author{Nithin Salevemula}
\author{Shreyas Pai}
\affil{Indian Institute of Technology Madras\footnote{Email: \url{nithinprakash9999@gmail.com}, \url{shreyas@cse.iitm.ac.in}}}
\newif\ifdraft%
\begin{document}

\maketitle

\begin{abstract}
We study the problem of computing a Maximal Independent Set (MIS) in distributed networks, where each node is a rational agent that receives a payoff depending on whether it is included in the MIS. In classical distributed computing, it is typically assumed that nodes follow the prescribed algorithm faithfully. However, this assumption fails when nodes are rational agents whose utilities depend on the algorithm's output. In such cases, nodes may deviate from the algorithm if it increases their expected payoff.

Classical solutions for MIS assume that nodes generate random bits honestly or rely on unique identifiers to break symmetry. However, in rational settings, nodes may manipulate randomness to gain a strategic advantage, and relying solely on unique identifiers can result in unfairness, where some nodes have zero probability of joining the MIS and thus no incentive to participate. To address these challenges, we propose two algorithms that work under a utility model, where agents are incentivized to compute locally correct solutions while also exhibiting preferences among these solutions. In these algorithms, randomness is generated through interactions between neighboring nodes, which can be viewed as simple games, where no single node can unilaterally change the outcome. This approach allows us to break symmetry while being compatible with rational behavior.

For both algorithms, we show that regardless of the execution history that has occurred, no agent can improve its expected utility by deviating from that stage, provided no other agents deviate. This is a much stronger guarantee compared to Trembling Hand Perfect Equilibrium, which is typically used in such scenarios. Both algorithms guarantee that when all the nodes follow the algorithm, every node has a positive probability of joining the MIS, and that the final output is a correct Maximal Independent Set. Finally, for both algorithms, we can guarantee termination in $ O(\log n) $ rounds with high probability under mild additional assumptions, where \(n\) is the number of nodes in the network.
\end{abstract}

\section{Introduction}
\label{sec:intro}

In many classical distributed computing settings, some fraction of the nodes is assumed to be byzantine while the rest are obedient. An obedient node follows the prescribed algorithm without deviation, whereas byzantine nodes are assumed to be controlled by an adversary that is actively trying to ensure that the algorithm's goals are not achieved. Such a setup is valid as long as the obedient nodes follow the prescribed algorithm perfectly and the fraction of byzantine nodes is not too high. However, in many real-world scenarios, the situation is a bit more nuanced: nodes in the network often have their own incentives, which makes algorithm design challenging even without the presence of a byzantine adversary. In this work, we take a "middle path" of assuming that each node in the network is a selfish agent with its own utility function that it wants to maximize. If we allow for arbitrary utility functions, the behavior of the rational nodes will be similar to that of a byzantine node\footnote{Albeit in this case, all the byzantine nodes may not be controlled by a single adversary}. Therefore, we restrict our attention to utility functions where nodes are incentivized to compute a correct solution, but may still have a preference over which correct solution they want to compute. See Section \ref{sec:model} for a full description of our model.

Consider the Maximal Independent Set (MIS) problem as a working example. Although all nodes are primarily interested in ensuring that the MIS is computed correctly, they also prefer to be included in the set. As a result, a node may deviate from the prescribed algorithm if doing so increases its chance of joining the MIS. This behavior is referred to as \emph{rationality} in the game-theoretic literature, where agents (nodes) act to maximize their individual utility. Deviations due to individual rationality can harm both the correctness and termination guarantees of classical MIS algorithms. Therefore, we need to design new algorithms that are resilient to rational behavior. Such algorithms guarantee that nodes, acting in their own self-interest, still prefer to follow the algorithm rather than deviate.

\paragraph*{Challenges for Computing MIS with Rational Agents}
\label{sec:challenges}
Classical distributed algorithms for computing a Maximal Independent Set (MIS), such as Luby's algorithm~\cite{Luby1986} and the algorithm by Métivier et al.~\cite{Metivier}, are not suitable in rational settings where nodes may deviate in order to maximize their own utility. As an illustrative example, let us examine both these algorithms under the utility model where nodes are incentivized to be included in the MIS. 

In Luby's algorithm \cite{Luby1986}, each node \(i\) proposes to join the MIS with probability \( 1/2d(i)\), where \( d(i) \) is the degree of the node \( i \). A node successfully joins the MIS if none of its neighbors propose in the same round. However, this approach fails in a rational setting: since the decision to propose is determined by internal randomness, a rational node can always falsely claim that it has proposed to join the MIS with probability \( 1/2d(i)\), but it got lucky in every round. If all nodes act this way, no node ever joins the MIS, and the algorithm never terminates. In the algorithm of Métivier et al.~\cite{Metivier}, each node generates a rank uniformly at random from the interval \([0,1]\). A node joins the MIS if its rank is lower than all of its neighbors' ranks. Again, this algorithm will not work if nodes prefer joining the MIS: a rational node can simply report a rank of 0 in every round, ensuring it always appears eligible to join. If all nodes behave similarly, no node is eliminated, no progress is made, and the algorithm stalls indefinitely.

Given the above discussion, one may argue that we should then just focus on deterministic MIS algorithms, as there has been impressive progress on this front \cite{GhaffariGFOCS24}. However, deterministic algorithms have the following problem: a node may realize that it can never join the MIS by following the algorithm, even though none of its neighbors have currently joined the MIS. In such a situation, it will have no incentive to participate further in the algorithm. As a simple example, consider a deterministic algorithm where, in each round, a node joins the MIS if its ID is smaller than those of all its neighbors. If a node $i$ has a neighbor $j$ with the globally minimum ID, $i$ knows it can never join the MIS even though $j$ has not yet joined the MIS. Therefore, $i$ has no incentive to participate in the algorithm. In fact, any deterministic algorithm would face the same problem when nodes are rational agents aiming to maximize their own utility. This problem even holds for algorithms that have a randomized part followed by a deterministic part.

These examples demonstrate that classical MIS algorithms do not translate directly to settings with rational agents. Therefore, new algorithms must be designed that are resilient to deviations that arise from rational behavior and guarantee participation from every node.

\paragraph*{Our Contributions}
\label{sec:contributions}
We use the classical framework of \emph{extensive-form games} to analyze the MIS problem in distributed networks with rational agents. 
Each node in the network is modeled as a rational player, motivated by the incentive to be included in the MIS. 
We define a \emph{strategy algorithm} that specifies, for each node, the action it should take in every round until termination, 
as a function of its state—including those that may arise due to deviations from the prescribed algorithm, either by itself or by other nodes. 
We then propose two such strategy algorithms in which nodes have no unilateral incentive to deviate at any state, 
even at states that occur as a result of earlier deviations. 
We refer to this property as a \emph{belief-independent sequential equilibrium}, 
which provides a stronger guarantee than Nash equilibrium, sequential equilibrium, or trembling-hand perfect equilibrium in games of imperfect information.

Furthermore, when all agents follow the strategy algorithm, the nodes joining the independent set form a valid MIS, and every node has a non-zero probability of being included in the MIS. The following is a brief description of our proposed strategy algorithms.
\begin{itemize}
    \item The first algorithm (Section \ref{sec:mis-rps}): In each iteration, every active node plays a simple two-player game of rock-paper-scissors with each of its neighbors in order to decide who gets to join the MIS. Only a node that wins all its games against its neighbors in an iteration gets to join the MIS. This rule allows us to ensure that no agent can benefit from unilaterally deviating from the algorithm at any point. Moreover, when all nodes follow the algorithm, we can guarantee termination in \(O(\log n)\) rounds with high probability, 
assuming the maximum degree of the network graph is constant, where \(n\) is 
the number of nodes in the network.
    
    \item The second algorithm (Section \ref{sec:mis-crypto}) uses lightweight cryptographic assumptions to generate randomness in a way that prevents manipulation. Each agent computes its priority for the current round by combining its own random value with a signed random value received from one of its neighbors. Since the randomness is jointly determined, no agent can unilaterally bias the outcome. These priorities are then used to simulate one round of the algorithm of Métivier et al.~\cite{Metivier}. Repeating this process, the algorithm guarantees termination in \( O(\log n) \) rounds with high probability.
\end{itemize}

\section{The LOCAL Model with Rational Agents}
\label{sec:model}
We work in the LOCAL model, a classic synchronous message-passing model of distributed computing \cite{Linial1987}. The network is abstracted as \(G = (V, E)\), an undirected graph, where \(V\) is the set of nodes (agents) and \(E \subseteq V \times V\) is the set of edges. As usual we denote \(n = |V|\) and \(m = |E|\).  
Each node represents a selfish agent that acts to maximize its own utility, and each edge \((i, j) \in E\) indicates that agents \(i\) and \(j\) can communicate directly. We assume no prior communication occurs between the agents before the algorithm begins. Each agent is expected to irrevocably output one of three values: \(1\), declaring itself as a member of the MIS; \(0\), declaring itself to not belong to the MIS; or \(\bot\), indicating an abort. The output of a node, once computed, is visible to all the neighbors of that node.

The utility of each node depends only on its own output and those of its neighbors. As discussed earlier, if we allow for arbitrary utilities, the behavior of the rational nodes will be similar to that of Byzantine nodes, which will lead to trivial impossibilities. Therefore, we consider a particular family of utility functions in which nodes are incentivized to compute a locally correct solution. Additionally, each node $i$ gets utility value \(v_i > 0\) if it outputs \(1\) and all its neighbors output \(0\). More precisely, the utility function for a node \( i \) can be defined as a function on its own output and its neighbors' output. Let \( \text{out}(i) \in \{1,0,\bot\} \) denote the output of node \(i\) and let \( \text{out}(N(i))\) denote the list of outputs of all its neighbors. Then, the utility function \(u_i\) for node \(i\) is defined in Equation~\ref{eq:utility}. Here we slightly abuse notation by interpreting \( \text{out}(N(i))\) as a set in the case analysis.  
The conditions are evaluated in order; once a condition is satisfied, the subsequent conditions are not considered, even though they may be applicable.
\begin{figure}
\centering
\[
u_i\bigl(\text{out}(i), \text{out}(N(i))\bigr) =
\begin{cases}
0, &
\text{if } 
\text{out}(i) = \bot 
~\text{or}~ 
\bot \in \text{out}(N(i))
\quad (\text{locally~aborted}), \\[6pt]
0, &
\text{if } 
\text{out}(i) = 0 
~\text{and}~ 
1 \in \text{out}(N(i))
\quad (\text{locally~valid, } i \text{ not~in~MIS}), \\[6pt]
v_i, &
\text{if } 
\text{out}(i) = 1 
~\text{and}~ 
1,\bot \notin \text{out}(N(i))
\quad (\text{locally~valid, } i \text{ in~MIS}), \\[6pt]
-\infty, & \text{otherwise}
\quad (\text{locally invalid solution}).
\end{cases}
\]
\caption{Local utility function for node \(i\).}
\label{eq:utility}
\end{figure}

Given the above model and assumptions, our goal is to design a distributed algorithm in which nodes, despite being able to deviate, find it optimal to follow the algorithm. However, if
a deviation does occur, we cannot simply ignore the deviating node; the algorithm must still
prescribe how it should optimally behave from that point onward. In other words, the algorithm
should tell each node what to do in every situation it could possibly face, even those that would occur only if some of them deviated earlier. This is precisely analogous to the notion of a \emph{strategy} in game theory.
To capture this requirement, we introduce the notion of a \emph{strategy algorithm}. Intuitively, it specifies how a node should act in every possible situation it may encounter, including those that arise due to deviations by itself or its neighbors. To define this formally, we first define the notion of a \emph{state}, and then use it to precisely characterize what constitutes a strategy algorithm.

\begin{definition}[State]\label{def:local_history} The \emph{state} of a node \( i \) at the beginning of round \( t \) consists of all messages sent and received by \( i \) in rounds \( t' < t \), the values of its local variables (including any private randomness), and any observable outputs of its neighbors from earlier rounds. \end{definition}

\begin{definition}[Strategy Algorithm]\label{def:strategy_algorithm}
A \emph{strategy algorithm} in the LOCAL model with rational nodes is a special type of algorithm, which 
specifies, for each node \( i \), the action it should take in every round until termination 
as a function of its state, including states that may arise due to deviations 
from the prescribed algorithm by itself or by other nodes.
\end{definition}

\begin{remark}
This means that in every round, all nodes know the precise internal computation and the external action prescribed by the algorithm. However, whether a player actually follows the prescribed action is left to the node's discretion.
We assume that a player cannot deviate in its internal deterministic computations; thus, the only possible deviations arise from (1) internal computations involving randomness or (2) external actions such as sending incorrect messages, omitting required messages, outputting values at unintended times, or outputting values different from those prescribed by the algorithm. 
This assumption is without loss of generality, since any deviation in deterministic internal computation must eventually manifest in the node’s external behavior to have any observable effect. Therefore, it suffices to model deviations as occurring in the external actions of a node rather than in its internal deterministic computation.
\end{remark}

\paragraph*{Intuition for our utility function and modeling choices}
Our aim in this paper is to study rational behavior in the LOCAL model. A basic feature in many LOCAL model algorithms is that the nodes perform computation for a certain number of rounds and finally produce a correct output upon termination. When the nodes are rational agents, we want them to terminate at some point; therefore, it is essential that once a node commits to an output, it cannot be changed. Moreover, our utility function is well-defined only once a node and its neighbors have computed their outputs, and allowing nodes to change their outputs makes it difficult to calculate their utility. We can still design algorithms where nodes are allowed to change their output by having nodes compute a temporary output that can be changed. But before termination, the nodes have to irrevocably commit to this output, and the utility is calculated based on the committed outputs of all nodes.

Next, we motivate our choice of utility function. If all nodes are in a locally valid solution, we have computed a (globally) correct MIS, and since we are interested in computing a correct MIS solution, we require that all nodes prefer a locally valid solution over a locally invalid solution. In fact, we do not want the nodes to try to join the MIS at a risk of being in a locally invalid solution with some probability. Hence, we assign a utility of \(-\infty\) to nodes in a locally invalid solution. Moreover, MIS has the property that if a node is in a locally invalid solution, it can always change its own output to be in a locally valid solution. Since an output change is not allowed, this means that a node in a locally invalid solution has committed too early to its output, and such behavior must be appropriately penalized.

If we have the same utility for all locally valid solutions, we are in the standard LOCAL model, as the rational nodes will faithfully follow any algorithm that computes a correct MIS. Therefore, to make the model interesting, we need to assign different utility values to different locally valid solutions. There are indeed many choices here, but for the purposes of this paper, we study one family of utility functions where nodes always prefer locally valid solutions where they belong to the MIS. The other choices in this space are also interesting, especially the case where nodes prefer locally valid solutions where they are not in the MIS. Designing rationality-resilient algorithms for these alternative families of utility functions is left as an open problem.

We now address why we introduce the third output symbol, \(\bot\). If we define the utility function with only two output labels \(0\) and \(1\), we may encounter the following scenario: if a node outputs $1$ by deviating from the protocol to get higher utility, then the best response for all its neighbors is to output $0$, since otherwise they obtain $-\infty$ utility. Therefore, the node that outputs \(1\) is getting away with its deviation simply because its neighbors can do nothing but support its action. 
To address this problem, we need a way for nodes to retaliate when their neighbors deviate from the algorithm and irrevocably output \(1\). We do this by introducing a new output symbol \(\bot\) which can be interpreted as a credible threat. Now, if a node \(i\) deviates by outputting \(1\), its neighbors can output \(\bot\) instead of \(0\) at no extra penalty to themselves, which makes the deviation of \(i\) not profitable. It is important to note that the output \(\bot\) is only used as a credible threat to ensure nodes do not deviate from the algorithm. In a correct execution of the algorithm with no deviations, no node will ever actually output \(\bot\). So a node outputting \(\bot\) can be considered as a failure of the algorithm execution.

We also assume that the output of each node is visible to all of its neighbors.   This assumption is necessary because, in our model, the utilities of nodes are based on their own output and the outputs of their neighbors. If outputs of neighbors are not directly observable, a node would have to rely solely on what its neighbors report as their output, which introduces the possibility of misreporting. In such a case, a trivial but undesirable outcome could arise where all nodes output \(1\) while falsely reporting to have output \(0\) to their neighbors, resulting in a configuration that is not a valid MIS, as every node is part of the MIS in reality. Hence, visibility of the neighbor's output is required to ensure that utilities are well-defined and correspond to actual local configurations.

\subsection{MIS with Rational Nodes as Extensive-Form Game}\label{subsec:ext} 
As discussed in Section~\ref{sec:model}, we consider a model in which nodes are \emph{rational} and interact with one another to change their states. 
Each node’s utility depends on the final state of the network, making the overall system naturally analogous to a strategic game among the nodes. Since these notions are well formalized in the game-theoretic framework, we model our setting as an \emph{extensive-form game}.

In an \(r\)-round LOCAL model algorithm, a node has no information about the graph outside its \(r\)-hop neighborhood. This means the corresponding extensive form game must be one with \emph{imperfect information}. Since each node retains complete memory of all its past actions and observations, the game satisfies the property of \emph{perfect recall}. Finally, because we assume synchronous communication rounds in which all nodes act simultaneously, the game can be viewed as an \emph{extensive-form game with imperfect information, perfect recall, and simultaneous moves}.

Formally, an extensive-form game is defined as a tuple consisting of several components 
(see \cite{gametheorybook} for a complete description). 
In our context, we do not require the full generality of this definition; 
it suffices to specify only the elements necessary to model our distributed setting. 
Accordingly, we work with a simplified version of the tuple and its associated definitions.  

\begin{definition}[Distributed MIS Problem with Rational Agents as an Extensive Form Game with Imperfect Information, Perfect Recall and Simultaneous Moves]\footnote{This is an instance of an extensive form game, refer \cite[Chapter 12]{gametheorybook} for the exact definitions of extensive form game
with imperfect information, perfect recall and simultaneous moves.}\label{def: MIS as game}
Given a graph \( G = (V, E) \), where \(V\) is the set of nodes (agents) and \(E \subseteq V \times V\) is the set of edges and each edge \((i, j) \in E\) indicates that agents \(i\) and \(j\) can communicate directly.  The MIS problem over \(G\) can be modeled as an extensive game, denoted by \(\mathcal{G}\). Formally, we define \(\mathcal{G}\) as a tuple 
\(
\mathcal{G} = \langle V, H, P,  \{\mathcal{I}_i\}_{i \in V}, \{A_i\}_{i \in V}, \{\succsim_i\}_{i \in V} \rangle
\)
whose components are described below.
\begin{itemize}
  \item \(V\): the set of players, corresponding to the nodes of the graph.

\item \(H\): the set of all possible prefix-closed histories of messages exchanged between neighboring nodes 
and their outputs, subject to the following constraints: each node produces an output at most once, 
and once a node outputs, it becomes inactive—i.e., it can no longer send or receive messages. 

  \item A history \( h \in H \) is called \emph{terminal} if it is infinite, or if all nodes output a value from \(\{0,1,\bot\}\). Denote by \(Z\) the set of all terminal histories.

\item \(P : H \setminus Z \to 2^V \setminus \{\emptyset\}\): assigns to each nonterminal history \(h\) the set of players who move simultaneously at that history. P(h) is the set of nodes that have not yet produced an output at history \(h\)

\item Since any player \(i\in V\) observes only the messages it has sent and received, together with the outputs of its neighbors, so every history \(h\in H\) has a local projection onto the information accessible to \( i \). Formally, we define the projection function
\(
\mathrm{proj}_i : H \to H_i,
\),
where \( H_i \) denotes the set of all possible local histories of player \( i \). For any global history \( h \in H \), the projection \( \mathrm{proj}_i(h) \) is the subsequence of events in \( h \) that involve \( i \)—namely, all messages sent or received by \( i \), together with the outputs of its neighbors observed by \( i \). 
Two global histories \( h, h' \in H \) are said to be \emph{indistinguishable} to \( i \) if they induce the same local history, that is,
\(
h \sim_i h' \iff \mathrm{proj}_i(h) = \mathrm{proj}_i(h').
\)
This indistinguishability relation \( \sim_i \) partitions the set \( \{ h \in H \mid i \in P(h) \} \) 
of global histories where \( i \) is active into equivalence classes.
We denote this partition by \( \mathcal{I}_i \), where each element \( I_{i} \in \mathcal{I}_i \) corresponds to the set of all global histories that induce the same local history to \(i\). Each such class \(I_i\) is therefore an information set of player \(i\), since
player \(i\) cannot distinguish between any two global histories in the same class
and therefore must choose the same action at all of them. Let \(I_i(h_i)\) denote the set of all global histories that project local history \(h_i\) on \(i\). 
\item At any information set \(I_i\in \mathcal{I}_i\).  
The set of possible actions for player \(i\),  denoted \(A_i(I_i)\), consists of any combination of:  
(1) perform internal computation,  
(2) send arbitrarily long messages to any subset of undecided neighbors, and  
(3) output a value in \(\{0,1,\bot\}\).

\item For each player \( i \in V \), a preference relation \( \succsim_i \) is defined over terminal histories \(Z\), representable by a local utility function \( u_i: Z \to \mathbb{R} \). For an infinite history \( h \in Z \), the utility of all nodes is defined to be \(0\). For a finite terminal history, the utility is given by the function in Equation~\ref{eq:utility}.

\end{itemize}
\end{definition}
Given the definition of the game, we can define pure, mixed, and behavioral strategies as follows:
\begin{definition}[Pure, Mixed and Behavioral Strategies {\cite[Definitions~203.1, 212.1]{Osborne1994}}]
A \emph{pure strategy} of player \(i \in V\) of an extensive-form game \(
\mathcal{G}\) is a function that assigns 
an action in \(A_i(I_i)\) to each information set \(I_i \in \mathcal{I}_i\).
A \emph{mixed strategy} of player \(i\) is a probability measure over the set of
\(i\)’s pure strategies.  Whereas a \emph{behavioral strategy} of player \(i\) is a
collection \((\beta_i(I_i))_{I_i \in \mathcal{I}_i}\) of independent probability measures, where \(\beta_i(I_i)\)
is a probability measure over \(A(I_i)\).

\end{definition}

\begin{remark}
Under this formulation, a strategy algorithm can be viewed as a \emph{behavioral strategy}, 
since it specifies, for each node, the action (or distribution over actions) it should take at every possible state it may encounter, which corresponds to the local history in the extensive form game \(\mathcal{G}\) defined above. 
\end{remark}

Next, we define various equilibrium notions for extensive-form games before introducing the equilibrium guarantee that can be obtained.

\begin{definition}[Nash equilibrium {\cite[Chapter 11.5]{Osborne1994}}]
A Nash equilibrium in mixed strategies of an extensive game is
a profile \(\sigma^*=(\sigma^*_i)_{i\in N}\) of mixed strategies with the property that for every 
player \(i \in N\) we have
\(
O(\sigma^*_{-i}, \sigma^*_i) \succeq_i O(\sigma^*_{-i}, \sigma_i)
\)
for every mixed strategy \(\sigma_i\) of player \(i\),
where \(O(\sigma)\) denotes the expected outcome (or distribution over terminal histories) 
induced by the strategy profile \(\sigma\) and \( \sigma^*_{-i} \) denotes the equilibrium strategy profile of all players 
except \( i \).
\end{definition}

\begin{definition}[Sequential Equilibrium{\cite[Definition 222.1]{Osborne1994}}]
A \emph{sequential equilibrium} of an extensive-form game 
\(\mathcal{G}\)
is an \emph{assessment} 
\((\beta, \mu)\),
where 
\(\beta = (\beta_i)_{i \in N}\) is a profile of \emph{behavioral strategies} 
(which assign, for each information set \(I_i \in \mathcal{I}_i\), a probability distribution over the available actions at \(I_i\)) 
and \(\mu\) is a \emph{belief system}, that is, a function that assigns to every player \(i\) 
and every information set \(I_i \in \mathcal{I}_i\) 
a probability measure over the set of histories contained in \(I_i\), conditional
on \(I_i\) being reached.
The assessment \((\beta, \mu)\) is a sequential equilibrium if it satisfies the following conditions:
\begin{enumerate}
    \item \textbf{Sequential Rationality:} 
    An assessment \((\beta, \mu)\) is said to be \emph{sequentially rational} if, 
    for every player \(i \in V\) and every information set \(I_i \in \mathcal{I}_i\), 
    the strategy \(\beta_i\) prescribes actions that maximize \(i\)’s expected utility at \(I_i\),
    given the belief system \(\mu\) and the strategies \(\beta_{-i}\) of the other players.
    Formally, let \(O(\beta, \mu \mid I_i)\) denote the distribution over terminal histories
    induced by \((\beta, \mu)\) conditional on \(I_i\) being reached.
    Then \((\beta, \mu)\) is sequentially rational if
    \(
    O(\beta, \mu \mid I_i)
    ~\succeq_i~
    O\bigl((\beta_{-i}, \beta^{'}_i), \mu \mid I_i\bigr)
    \quad \text{for every behavioral strategy } \beta^{'}_i \text{ of player } i.
    \) 
    \item \textbf{Consistency:} An assessment \((\beta, \mu)\) is said to be \emph{consistent} if there exists a sequence of assessments \(\{(\beta^n, \mu^n)\}_{n=1}^\infty\) that converges to \((\beta, \mu)\) in Euclidean space, where each strategy profile \(\beta^n\) is completely mixed and each belief system \(\mu^n\) is derived from \(\beta^n\) using Bayes' rule.

\end{enumerate}
\end{definition}

Another equilibrium notion that is considered is called \emph{Trembling Hand Perfect Equilibrium}, which is a refinement of sequential equilibrium with a particular assessment. It ensures that each player's strategy remains optimal, even if players occasionally make unintended moves with a very small probability, thereby ensuring that strategies are robust to such "trembles" in decision-making. While both sequential and trembling hand perfect equilibria provide a guarantee only with respect to a particular belief system (that is, a strategy is optimal given that belief system), 
Our strategy algorithms give behavioral strategies that admit a stronger guarantee: they are Sequentially Rational for \emph{any} belief system. 
Hence, unlike sequential equilibrium, we do not define behavioral strategies relative to a belief system, 
But only over information sets. 
We formalize this stronger notion as a \EQ.
\begin{definition}[\EQ]\label{def:EQ}
A behavioral strategy profile 
\(\beta^\ast = (\beta_i^\ast)_{i \in N}\)
of an extensive-form game \(\mathcal{G}\) 
is an \emph{\EQ} if, for every player \(i\) 
and every information set \(I_i \in \mathcal{I}_i\), 
the behavioral strategy \(\beta_i^\ast\) is sequentially rational, 
regardless of player \(i\)’s belief over the histories in \(I_i\).
Formally, for every player \( i \in V \), every information set \( I_i \in \mathcal{I}_i \),  
any belief system \( \mu \) of player \( i \),
\(
O(\beta^\ast, \mu \mid I_i)
~\succeq_i~
O\bigl((\beta_{-i}^\ast, \beta'_i), \mu \mid I_i\bigr)
\) for every behavioral strategy \( \beta^{'}_i\) of player \(i\). 
where \(\beta_{-i}^\ast\) denotes the equilibrium strategies of all players other than \(i\) 
in the profile \(\beta^\ast\).
\end{definition}

Now we are ready to define the desirable properties of a strategy algorithm.
\begin{definition}[Rationality Resilient Strategy Algorithm]\label{def:RR}
A \emph{Strategy Algorithm} is said to be \emph{rationality resilient}(with respect to the above MIS problem) if it satisfies the following properties 
\begin{enumerate}
  \item \textbf{\EQ:} The behavioral strategy corresponding to the strategy algorithm should induce \EQ{} \ref{def:EQ}.
    \item \textbf{Termination:} If all nodes follow the algorithm, it must terminate after a finite number of rounds with high probability.
    
\item \textbf{Correctness:} If all nodes follow the algorithm and terminate, the set of nodes that output \(1\) forms a valid Maximal Independent Set (MIS), and all remaining nodes output \(0\).

\item \textbf{Positive Inclusion Probability:} If all nodes follow the algorithm, then each node must have a non-zero probability of eventually outputting \(1\) in \emph{every round} where none of its neighbors has output \(1\). This ensures that no node prematurely drops out of participation based on the belief that it has no future chance of joining the MIS.
\end{enumerate}

\end{definition}

\subsection{Related Work}
\label{sec:related-work}
The study of incentive-compatible distributed algorithms has led to the field of \emph{Distributed Algorithmic Mechanism Design (DAMD)}~\cite{DAMD}, which extends the classical \emph{Algorithmic Mechanism Design (AMD)} framework to distributed settings where there is no trusted central authority. While AMD focuses on designing truthful mechanisms in centralized settings, DAMD addresses the challenge of ensuring that rational agents follow prescribed strategies when the computation is decentralized.

Nisan~\cite{Nisan99} gave a distributed mechanism for computing a truthful maximum independent set in chain networks. Later, Hirvonen and Ranjbaran~\cite{DLDM} proposed a distributed mechanism for all local optimization problems, including Maximal Independent Set, that guarantees a $\Delta$-approximation. However, both approaches require monetary transfers (payments) to third parties to enforce truthfulness -- something we explicitly aim to avoid in our work. Moving beyond the third-party setting, other widely studied notions include \emph{self-stabilization}\cite{SS}\cite{SSI} and \emph{fairness}\cite{DBLP:conf/ipps/FinemanNSW14}\cite{AbrahamDH2019}. The work by Amoussou-Gueno et al. \cite{RBB} analyzes the dynamics between rational and Byzantine players in blockchain consensus protocols, highlighting their implications for security and efficiency in decentralized systems.

Rationality has also been extensively studied in the context of the \emph{Secret Sharing} problem. 
Halpern et al.\ \cite{rssmp} introduced the notion of rational secret sharing, showing that classical schemes fail when agents act in a selfish manner, and proposed randomized protocols that restore incentive compatibility. Subsequent works~\cite{DMG,rss,rsscpr} explored alternative utility models and provided further constructions under these variations. 
Although the problem domain is different, these works share with ours the high-level goal of designing distributed algorithms that are resilient to rational deviations.

Solving \emph{Locally Checkable Labeling (LCL) problems} with rational nodes was formalized as extensive-form games by Collet et al.\ \cite{LDASA}. They show that for all LCL problems solvable by greedy sequential algorithms -- such as MIS, \((\Delta + 1)\)-coloring, and maximal matching -- there exist distributed algorithms that are robust to selfish behavior and form trembling hand perfect equilibria. While we adopt an equivalent extensive-form game framework, our model differs in several key respects. First, in their setting, actions are restricted to choosing labels, whereas in our model, nodes may also send arbitrary messages to their neighbors. Second, their formulation allows labels to be revised in later stages, while in our setting, label choices, once made, cannot be changed. Finally, in their model, the game always terminates in a valid LCL solution, whereas in our framework, the outcome may correspond to an invalid solution. In our work, we study the MIS problem under a specific family of utility functions and present a strategy that is sequentially rational irrespective of beliefs, which is a much stronger guarantee than trembling hand perfect equilibrium.

Fineman et al. \cite{DBLP:conf/ipps/FinemanNSW14} define fairness in the context of computing maximal independent sets as balanced inclusion probabilities across nodes.
However, they do not allow nodes to have rational behavior, and hence their algorithm is not robust to strategic deviations. 
In contrast, Abraham et al.\ \cite{AbrahamDH2019} study fair leader election for specific topologies such as chains and cliques. 
Their model assumes that agents strictly prefer the existence of some leader over no leader, which is conceptually similar to our assumption that agents prefer to be included in the MIS. 
Importantly, their protocols are also resistant to collusion. 
Our algorithms adopt a similar idea of generating randomness from inputs provided by multiple agents with conflicting preferences over the outcome.

\section{A Strategy Algorithm for MIS with Rational Nodes}
\label{sec:mis-rps}

We now describe our first strategy algorithm, which employs a simple and symmetric tie-breaking mechanism: each node plays a Rock-Paper-Scissors (RPS) game with each of its neighbors, and only the nodes that win with all their neighbors are allowed to join the MIS. We note that the RPS game between two nodes \(i\) and \(j\) must be well defined 
even if one or both players choose not to play the game. 
If both nodes send valid moves to each other, then the outcome is determined by 
the standard RPS rules: rock breaks scissors, scissors cut paper, paper covers rock. 
If exactly one of the two nodes sends an invalid move or sends no move at all, 
then the other player wins. 
If both nodes fail to send valid moves, the result is a tie.  We capture all these cases formally in the following definition.
\begin{definition}[RPS Outcome Function]\label{rps_function}
Let 
\(
\text{rps}_{i,j} : \{r, p, s, \phi\}^2 \to \{i, j, \text{tie}\}
\)
be a function which determines the winner between two nodes \(i\) and \(j\) that are playing an RPS game with each other. Here \(r\), \(p\), and \(s\) correspond to \emph{rock}, \emph{paper}, and \emph{scissors}, respectively, and \(\phi\) denotes an invalid or missing move (i.e., any input not in \(\{r, p, s\}\)). When node \(i\) plays move \(s_i\) and node \(j\) plays move \(s_j\), the function is defined as follows:

\[
\text{rps}_{i,j}\bigl(s_i, s_j\bigr) =
\begin{cases}
i, & \mathrm{if~} (s_i, s_j) \in \{(r,s), (p,r), (s,p), (r,\phi), (p,\phi), (s,\phi) \} \\[4pt]
j, & \mathrm{if~} (s_i, s_j) \in \{(s,r), (r,p), (p,s), (\phi, r), (\phi, p), (\phi, s)\} \\[4pt]
\text{tie}, & \mathrm{if~} s_i = s_j
\end{cases}
\]
\end{definition}

\paragraph{RPS Algorithm Overview}
The algorithm proceeds in \emph{iterations}, and each iteration consists of \emph{three rounds}. Initially, all nodes are in an \texttt{undecided} state. Once a node outputs a value in $\{0, 1, \bot\}$, it becomes \texttt{decided} and takes no further action.
Each node maintains a \texttt{BeenCheated} flag, initialized to \texttt{false}, and in every iteration, the strategy algorithm proceeds through the following rounds:
\begin{itemize}
    \item \textbf{Round 1 (Play RPS):} Each node \(i\) selects a move from \(\{r, p, s\}\) uniformly at random for each of its undecided neighbors and sends it. The RPS games with different neighbors are treated independently, allowing the node to play distinct moves against each of its neighbors.
 \item \textbf{Round 2 (Join MIS if possible):} Each node \(i\) checks whether any neighbors has already output \(1\) or \(\bot\). If such neighbors exist, they are considered to have deviated from the algorithm, since they were not supposed to do so under the prescribed algorithm. In that case, \(i\) ignores the rest of the algorithm and outputs \(\bot\) in every subsequent round where it remains \texttt{undecided}. Otherwise, if no neighbor has output \(1\) or \(\bot\), then if \(i\) wins against all its \texttt{undecided} neighbors according to Definition~\ref{rps_function}, it outputs \(1\) and joins the MIS. In addition, if \(i\) did not receive a move from an \texttt{undecided} neighbor in the previous round, it sets its \texttt{BeenCheated} flag to \texttt{true}.

 \item \textbf{Round 3 (Respond to neighbor outputs):} Each node \(i\) checks whether any neighbor \(j\) has output \(\bot\). 
If such a node exists, \(i\) outputs \(\bot\) in every subsequent round in which it remains \texttt{Undecided}. 

  If any neighbor \(j\) has output \(1\) in previous round, 
then:
\begin{itemize}
    \item If \(i\) lost to all such neighbors
    then node \(i\) outputs \(0\) in every subsequent round while it remains undecided, 
    unless \texttt{BeenCheated} is \texttt{true}, in which case it outputs \(\bot\) in every subsequent round while it remains undecided.
    \item Otherwise, if there exists a neighbor \(k\) that lost to \(i\) but still output \(1\) 
    then \(i\) outputs \(\bot\) in every subsequent round while it remains undecided.
\end{itemize}
Node \(i\) outputs \(1\) if it has no undecided neighbors; otherwise, it proceeds to the next iteration.

\end{itemize}

\subsection{Rationality Resilient RPS Algorithm }
\label{sec:rps-dev}
The distributed strategy algorithm described above is formalized in Algorithm \ref{alg:rps_dev}. This is followed by an analysis showing that this strategy is rationality resilient, encapsulated in Theorem \ref{thm:RPS_main_theorem}.

\begin{algorithm}
\caption{\label{alg:rps_dev}RPS based MIS Strategy algorithm for node \( i \)}
\SetKwBlock{Init}{Initialization}{}
\Init{
  \texttt{UndecidedNeighbors} \( \gets N(i) \)\;
    \texttt{BeenCheated} \( \gets \) false\;
}
\If{\(N(i) = \emptyset\)}{
From this point onward, disregard the rest of the algorithm and Output \(1\) in every round where node \(i\) is still \textsc{Undecided}\;
}

\While{\(i\) is \emph{\textsc{Undecided}}}{
    \tcc{Round 1: Play RPS with Undecided Neighbors}
    \ForEach{\( j \in \) \emph{\texttt{UndecidedNeighbors}}}{
        Choose strategy \( s_i(j) \in \{\text{r}, \text{p}, \text{s}\} \) uniformly at random and send to \( j \)\;
    }
    
    \tcc{Round 2: Join MIS if possible}
    \ForEach{\( j \in \) \emph{\texttt{UndecidedNeighbors}}}{
        \If{\( j \) has output \( 1 \) or \( \bot \) }{
       From this point onward, disregard the rest of the algorithm and Output \(\bot\) in every round where node \(i\) is still \textsc{Undecided}\;
        }
        \ElseIf{\( j \) has output \( 0 \)}{
            Remove \( j \) from \texttt{UndecidedNeighbors}\;
        }
    }
    \ForEach{\( j \in\) \emph{\texttt{UndecidedNeighbors}}}{
     \If{message \( s_j(i) \) is \(\phi \) }{
            \texttt{BeenCheated} \( \gets \) true\; 
        }
    Compute \( rps_{i, j}(s_i(j),s_j(i))\) as defined in Definition~\ref{rps_function};

    }
    \If{\( rps_{i, j}(s_i(j),s_j(i)) = i \) for all \( j \in \texttt{UndecidedNeighbors} \)}{
        Output \( 1 \) (join MIS)\;
    }

    \tcc{Round 3: Observe Undecided Neighbors' Outputs}
    \If{there exits a node in \emph{\texttt{UndecidedNeighbors}} has output \( \bot \)}{
    From this point onward, disregard the rest of the algorithm and output \(\bot\)
    }
    \ElseIf{there exits a node in \emph{\texttt{UndecidedNeighbors}} has output \( 1 \)}{
        \If{there exists a neighbor \( j \) with output 1 and \( rps_{i, j}(s_i(j),s_j(i)) \neq j \)}{
            \texttt{BeenCheated} \(\gets\) true\;
        }
        From this point onward, disregard the rest of the algorithm and output \( \bot \) if \texttt{BeenCheated} is \textbf{true}, and output \(0\) otherwise, in every round where node \(i\) is still \textsc{Undecided}\;
    }
    \Else{
        Remove all \( j \in \) \texttt{UndecidedNeighbors} who output \( 0 \)\;
        \If{\emph{\texttt{UndecidedNeighbors}} is empty}{
        From this point onward, disregard the rest of the algorithm and Output \(1\) in every round where node \(i\) is still \textsc{Undecided}\;
        }
    }
}
  \end{algorithm}

\begin{theorem}\label{thm:RPS_main_theorem}
Algorithm~\ref{alg:rps_dev} is a 
Rationality Resilient Algorithm as defined in Definition \ref{def:RR}.
\end{theorem}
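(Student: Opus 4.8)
The plan is to verify the four properties of Definition~\ref{def:RR} in turn, treating the \EQ{} property as the crux and the other three as comparatively routine consequences of the algorithm's structure. I would begin with Correctness, since it pins down the equilibrium-path behavior the other arguments rely on. Assuming every node follows Algorithm~\ref{alg:rps_dev}, I would first observe that no node ever outputs $\bot$: the only triggers for $\bot$ are an observed neighbor output of $1$ or $\bot$ that is inconsistent with honest play (a neighbor outputting $1$ after failing to beat the observer, or a propagated $\bot$), and the \texttt{BeenCheated} flag, none of which occur when all moves are valid and every $1$-output follows a genuine sweep. Independence follows from the RPS outcome function (Definition~\ref{rps_function}): two adjacent undecided nodes cannot both win their mutual game, hence cannot simultaneously win all their games, and once a node outputs $1$ its neighbors are forced to output $0$ in Round~3; so no two adjacent nodes output $1$. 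Maximality follows because a node outputs $0$ only in response to a neighbor's $1$, so every $0$-node is dominated, while the last-surviving and isolated nodes output $1$.

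For Termination and Positive Inclusion Probability I would isolate the single quantitative fact that an undecided node $i$ with $d$ undecided neighbors plays independent uniform moves and therefore wins all its games — and outputs $1$ — with probability exactly $(1/3)^{d} \ge (1/3)^{\Delta}$, where $\Delta$ is the maximum degree. Positive Inclusion is then immediate: in any round where no neighbor has yet output $1$, node $i$ is still eligible and outputs $1$ this iteration with probability $(1/3)^{d} > 0$. For Termination, since the moves are drawn afresh each iteration, the event that $i$ wins all its games is independent across iterations conditioned on $i$ remaining undecided, so the probability that $i$ is still undecided after $T$ iterations is at most $(1 - (1/3)^{\Delta})^{T}$; a union bound over the $n$ nodes gives termination within $O(3^{\Delta}\log n)$ iterations with high probability, which is $O(\log n)$ when $\Delta$ is constant.

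The heart of the proof, and the main obstacle, is the \EQ{} property. I would prove the stronger pointwise statement: at every information set $I_i$ and every global history $h \in I_i$, the prescribed action yields $i$ at least as much utility as any deviation, given that all others follow the algorithm. Since an action optimal at each history is optimal under any belief $\mu$ — its expected utility being a convex combination over the histories in $I_i$ — this immediately yields belief-independent sequential rationality. The argument is a case analysis over $i$'s state, classifying deviations into the four types of the remark after Definition~\ref{def:strategy_algorithm}: (i) altering the RPS move distribution, (ii) sending an invalid or absent move, (iii) outputting $1$ prematurely, and (iv) outputting $0$ or $\bot$ prematurely. The intermediate claims I would establish are: (a) against a uniform neighbor move every move of $i$ wins, ties, and loses each with probability $1/3$, so a type-(i) deviation induces exactly the same distribution over game outcomes — and, since honest neighbors condition only on outcomes and outputs, the same distribution over continuations — making it utility-neutral, while a type-(ii) deviation only forfeits a contestable game and is weakly harmful; (b) when $i$ has won all its games (a fact $i$ observes), every undecided neighbor has lost and therefore outputs $0$, not $1$, in Round~3, so outputting $1$ secures the maximal utility $v_i$ with certainty in every consistent history; (c) when $i$ has not swept, any $1$-output leaves an undecided neighbor $j$ with $\mathrm{rps}_{j,i}\ne i$, which provokes $j$ to output $\bot$ and caps $i$'s utility at $0$; and (d) outputting $0$ while no neighbor has output $1$ risks the locally-invalid $-\infty$ outcome, whereas the honest continuation never produces a locally invalid configuration and so guarantees utility $\ge 0$ at every history.

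The delicate point — where I expect to spend the most care — is (d) combined with the multi-round structure: I must show that continuing to the next iteration weakly dominates every immediate termination at each history, not merely in expectation under the equilibrium belief. The resolution is that following the algorithm is \emph{safe} pointwise — it never commits $i$ to $0$ before a neighbor's $1$ is observed, nor to $1$ before a genuine sweep is observed — so its continuation value is $\ge 0$ in every history, while every premature output is capped at $0$ and a premature $0$ can be $-\infty$, and the favorable outcome $v_i$ accrues only to the honest continuation. Threading this safety argument through the off-equilibrium information sets reached only after an earlier deviation is precisely what makes the guarantee belief-independent and strictly stronger than trembling-hand perfection, and is the step I would write most carefully.
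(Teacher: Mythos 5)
Your proposal is correct and, for the equilibrium property, follows essentially the same route as the paper: the paper also proves pointwise optimality at every global history $h$ (so that belief-independence is automatic), splits deviations into detectable and undetectable ones, and rests on two lemmas that are exactly your claims (c)/(d) — that a node whose neighbor has \texttt{BeenCheated} set can earn at most $0$ (since that neighbor will never output $0$ again), and that honest continuation from any history yields at least $0$. One small overstatement: your claim (b) that a sweeping node secures $v_i$ \emph{with certainty} fails at histories where some neighbor already has \texttt{BeenCheated} set to \texttt{true} (that neighbor outputs $\bot$, capping $i$ at $0$); this does not damage the equilibrium argument, since ``at least $0$ versus at most $0$'' is all that is needed there, and your safety framing already covers it. Where you genuinely diverge is Termination: the paper runs a Luby/M\'etivier-style edge-counting argument (indicator variables $X_{ij}$ for ``$i$ sweeps and $j$ nearly sweeps,'' a reverse Markov inequality to get a constant-probability constant-fraction edge reduction per iteration, then a Chernoff bound over iterations), arriving at $O(3^{4\Delta}\log n)$ rounds. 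Your per-node argument — each undecided node sweeps in a given iteration with probability at least $(1/3)^{\Delta}$ conditioned on any history, so it survives $T$ iterations with probability at most $(1-(1/3)^{\Delta})^{T}$, followed by a union bound over nodes — is both simpler and gives the strictly better bound $O(3^{\Delta}\log n)$. The edge-counting machinery buys nothing here because the target bound is already exponential in $\Delta$, so your more elementary route is preferable for this particular theorem.
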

We prove Theorem~\ref{thm:RPS_main_theorem} in four parts (Theorems~\ref{thm:rps_eq}, \ref{thm:rps_termination}, \ref{thm:rps_correctness}, and \ref{thm:rps_ppi}). Among these, Theorem~\ref{thm:rps_eq} requires intermediate results, which we establish through Lemmas~\ref{lem:rps_been_cheated} and~\ref{lem:rps_stick_to_algo}.

\begin{lemma}\label{lem:rps_been_cheated}
Let the game be at an arbitrary history $h$ with $i\in P(h)$, where $h \in I_i \in \mathcal{I}_i$. 
Suppose that one of $i$'s \texttt{UndecidedNeighbors} (say, $j$) has its \texttt{BeenCheated} flag set to \texttt{true}.  
Then, regardless of the history $h$ and of $i$'s strategy from $h$ onward, if all the other nodes follow Algorithm~\ref{alg:rps_dev} from $h$ onward, the utility of node $i$ will be at most~$0$.
\end{lemma}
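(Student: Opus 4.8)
The plan is to argue pathwise: I will show that along every realization of the continuation from $h$, node $i$'s realized utility is at most $0$, which immediately bounds its expected utility. The crux is a structural fact about the \texttt{BeenCheated} flag, namely that it is \emph{monotone}. Inspecting Algorithm~\ref{alg:rps_dev}, the flag is initialized to \texttt{false} and is only ever assigned \texttt{true} (in Round~2 upon receiving a $\phi$ move from an undecided neighbor, and in Round~3 upon detecting a neighbor that output $1$ without losing to it); no line ever resets it to \texttt{false}. Hence, since $j$'s flag is already \texttt{true} at $h$ and $j$ follows Algorithm~\ref{alg:rps_dev} from $h$ onward, it remains \texttt{true} for the rest of $j$'s execution.

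Next I would establish the key consequence: a node executing Algorithm~\ref{alg:rps_dev} with \texttt{BeenCheated} equal to \texttt{true} can never output $0$. This follows by inspection, since the only instruction in the algorithm that produces the output $0$ is the Round~3 rule that outputs $0$ precisely when \texttt{BeenCheated} is \texttt{false} (and otherwise outputs $\bot$). Therefore $j$, following the algorithm with its flag stuck at \texttt{true}, must eventually output $1$, output $\bot$, or remain \textsc{Undecided} forever.

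Finally I would carry out the case analysis on $i$'s final output using the utility function of Equation~\ref{eq:utility}. For $i$ to obtain strictly positive utility $v_i$, the function requires $\mathrm{out}(i)=1$ together with $1,\bot \notin \mathrm{out}(N(i))$; since $j \in N(i)$ is a neighbor (it is one of $i$'s \texttt{UndecidedNeighbors} at $h$, hence still adjacent and active) whose only admissible terminal outputs are $1$ or $\bot$, this configuration is unreachable. Concretely: if $j$ outputs $\bot$ then $\bot \in \mathrm{out}(N(i))$ and $i$'s utility is $0$; if $j$ outputs $1$ then $i$ outputting $1$ yields a locally invalid solution with utility $-\infty$ while any other output of $i$ yields utility at most $0$; and if $j$ never terminates, the history is infinite and all utilities, including $i$'s, are $0$. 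In every case $i$'s realized utility is at most $0$, independent of $i$'s own strategy.

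The main point to be careful about is that none of $i$'s deviations can undo the hypothesis: because $j$ is assumed to follow the algorithm and its flag is monotone, no message $i$ sends and no output $i$ produces can cause $j$ to reset \texttt{BeenCheated} or to execute the single $0$-output branch. This decouples $j$'s ``never outputs $0$'' behavior from $i$'s choices, which is exactly what makes the bound hold uniformly over all of $i$'s strategies and over all histories $h$ in the information set $I_i$.
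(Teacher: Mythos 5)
Your proposal is correct and follows essentially the same route as the paper's proof: observe that \texttt{BeenCheated} is monotone and never reset, conclude that $j$ can therefore never execute the sole $0$-output branch and must output $1$ or $\bot$ (or never terminate), and then read off from Equation~\ref{eq:utility} that $i$'s utility is at most $0$ in every case. Your version is slightly more careful than the paper's in explicitly handling the infinite-history case and the $-\infty$ subcase, but the underlying argument is identical.
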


\begin{proof}
    According to Algorithm \ref{alg:rps_dev}, node \( j \) outputs \( 0 \) only in Round 3 of an iteration, and even then, only if its \texttt{BeenCheated} flag is \texttt{false}. Since the \texttt{BeenCheated} flag, once set to \texttt{true}, is never reset to \texttt{false}, a node with \texttt{BeenCheated} = \texttt{true} will never output \( 0 \). Therefore, node \( j \) will only output \( 1 \) or \( \bot \). In either case, the utility of all its neighbors, including \(i\), is at most \( 0 \). 
\end{proof}

\begin{lemma}\label{lem:rps_stick_to_algo}
    Let the game be at an arbitrary history $h$ with $i\in P(h)$, where $h \in I_i \in \mathcal{I}_i$, and let \( h_i \) be the local history of node \( i \). 
    Then, provided that \( i \) and all other undecided nodes follow Algorithm~\ref{alg:rps_dev} from this point onward, the expected utility of node \( i \) is at least \( 0 \).
\end{lemma}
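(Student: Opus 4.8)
The plan is to prove the stronger statement that, conditioned on $i$ and all undecided nodes following Algorithm~\ref{alg:rps_dev} from $h$ onward, the \emph{realized} utility of $i$ is at least $0$ on every branch of the randomness; the bound on the expectation is then immediate. By Equation~\ref{eq:utility}, an infinite history gives $i$ utility $0$, and a finite terminal history gives $i$ negative (indeed $-\infty$) utility only in the locally invalid case~4. Hence it suffices to rule out the two configurations that trigger case~4: (a) $i$ outputs $1$ and some neighbor also outputs $1$; and (b) $i$ outputs $0$ while no neighbor outputs $1$ and none output $\bot$. I would organize the argument as a case analysis on the value $i$ eventually outputs under the algorithm.

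First I would dispose of the easy outputs. If $i$ outputs $\bot$, then $i$ is locally aborted and its utility is exactly $0$. If $i$ outputs $0$, I would trace Algorithm~\ref{alg:rps_dev} to see that this can happen only in the Round~3 branch that is entered precisely when some node of \texttt{UndecidedNeighbors} $\subseteq N(i)$ has output $1$; since outputs are irrevocable, that neighbor retains output $1$ for the rest of the execution, so configuration~(b) cannot arise and $i$'s utility is $0$ (case~2, or case~1 if a neighbor also aborts).

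The main case is $i$ outputting $1$, where I must rule out (a). Node $i$ outputs $1$ either when $N(i)=\emptyset$, or in the Round~3 branch reached only when \texttt{UndecidedNeighbors} is empty and no neighbor has output $1$ or $\bot$, or in Round~2 after winning its RPS game against every node still in \texttt{UndecidedNeighbors}; in the first two cases no neighbor outputs $1$, so only the Round~2 case needs work. Here I would use two facts. First, $i$ maintains \texttt{UndecidedNeighbors} by a deterministic rule that removes a neighbor only upon observing it output $0$, so any neighbor that had already output $1$ or $\bot$ is still tracked and would have forced $i$ to output $\bot$ in Round~2 rather than $1$; thus no neighbor had committed to $1$ or $\bot$ before $i$ did. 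Second, for an undecided neighbor $j$, since messages are delivered as sent, $i$ and $j$ evaluate the outcome function of Definition~\ref{rps_function} on the same ordered pair of moves, and by the symmetry built into that definition, if $i$ wins against $j$ then $j$ does not win against $i$. Hence $j$ does not win all of its own games and does not output $1$ in this iteration; and once $j$ observes $i$'s output $1$, the Round~3 logic forces $j$ to output $0$ or $\bot$, never $1$. Therefore no neighbor of $i$ ever outputs $1$, ruling out (a), and $i$'s utility is $v_i>0$ (or $0$ if some neighbor aborts).

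The step I expect to be the main obstacle is this Round~2 argument, and in particular justifying that the states of $i$ and its neighbors are mutually consistent even though $h$ is arbitrary and may result from deviations occurring \emph{before} $h$. The resolution I would stress is that outputs are observable to neighbors, that \texttt{UndecidedNeighbors} is a deterministic internal computation (hence faithfully maintained even by a node that deviated externally earlier), and that message contents are delivered as sent; together these guarantee both the RPS symmetry and the observation-reaction behavior used above. Since every realization yields utility at least $0$, the expected utility of $i$ is at least $0$.
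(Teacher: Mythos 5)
Your proposal is correct and follows essentially the same route as the paper's proof: identify that a negative payoff can only arise from the two locally invalid configurations, rule out the ``output $0$ with no neighbor in the MIS'' case by noting the algorithm only prescribes $0$ after observing a neighbor output $1$ (irrevocably), and rule out the ``two adjacent $1$s'' case by the symmetry of the RPS outcome together with the Round~2/Round~3 eligibility checks. Your version is somewhat more careful than the paper's -- explicitly handling infinite histories, the $\bot$ and $N(i)=\emptyset$ cases, and the consistency of local state at an arbitrary (possibly deviation-induced) history $h$ -- but these are refinements of the same argument rather than a different approach.
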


\begin{proof}
    A node \( i \) receives a utility less than zero (i.e., \( -\infty \)) only in two cases:
    \begin{enumerate}
        \item \( i \) outputs \( 0 \), and none of its neighbors output \( 1 \), or
        \item \( i \) outputs \( 1 \), and at least one of its neighbors also outputs \( 1 \).
    \end{enumerate}

    \textbf{Case 1} is not possible under the Algorithm~\ref{alg:rps_dev}: a node is instructed to output \( 0 \) only in Round 3 of an iteration, and only after observing at least one neighbor output \( 1 \). Hence, after any local history where the node \(i\) is undecided in Round 3 outputs \( 0 \), it must be the case that a neighbor has already output \( 1 \). Therefore, the first case cannot occur.

  \textbf{Case 2} is also prevented by Algorithm~\ref{alg:rps_dev}: a node outputs \( 1 \) only in Round 2 or Round 3 of an iteration.  
\begin{itemize}
    \item In Round 2, node \( i \) outputs \( 1 \) only if no neighbor has yet output \( 1 \) or \( \bot \), and \( i \) has won the RPS game against all its undecided neighbors. Since all neighbors also follow the algorithm, any neighbor that loses to \( i \) in the RPS game will not output \( 1 \). Thus, no neighbor of \( i \) can simultaneously output \( 1 \).
    \item In Round 3, node \( i \) outputs \( 1 \) only if all of its neighbors have already output \( 0 \). Hence, it is impossible for a neighbor to output \( 1 \) in this case.
\end{itemize}
Therefore, Case 2 cannot occur. Combined with the fact that Case 1 is also impossible, we conclude that by following Algorithm~\ref{alg:rps_dev}, node \( i \)'s utility is never less than \( 0 \).
\end{proof}

\begin{theorem}\label{thm:rps_eq}
The strategy corresponding to Algorithm \ref{alg:rps_dev} constitutes a \EQ.
\end{theorem}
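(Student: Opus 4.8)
The plan is to establish sequential rationality at every information set, for every belief, by proving the stronger \emph{belief-free} statement pointwise: for each global history $h \in I_i$ at which $i$ is active, the expected utility of $i$ from following Algorithm~\ref{alg:rps_dev} (over the remaining randomness, with all other undecided nodes following) is at least the expected utility from any deviation $\beta_i'$ applied from $h$ onward. Once this per-history domination is in hand, belief independence is immediate: for an arbitrary belief $\mu$ over the histories in $I_i$, both conditional expected utilities $O(\beta^\ast,\mu\mid I_i)$ and $O((\beta_{-i}^\ast,\beta_i'),\mu\mid I_i)$ are the $\mu$-weighted averages of the per-history expectations, so the inequality is preserved by convexity. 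This reduction is exactly what lets us avoid quantifying over belief systems, and it matches the per-history formulation of Lemmas~\ref{lem:rps_been_cheated} and~\ref{lem:rps_stick_to_algo}.

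Fixing such an $h$, I would split into two cases according to whether some undecided neighbor of $i$ already has its \texttt{BeenCheated} flag set to \texttt{true} at $h$. In that case Lemma~\ref{lem:rps_been_cheated} bounds the utility of \emph{any} strategy of $i$ (following included) by $0$, while Lemma~\ref{lem:rps_stick_to_algo} bounds following from below by $0$; hence following attains exactly $0$ and weakly dominates every deviation. The remaining case---no undecided neighbor has been cheated---is the substantive one, and I would attack it via the one-shot deviation principle: it suffices to show that deviating at the current prescribed action and then reverting to the algorithm cannot raise $i$'s expected utility (the prescribed action is constant across all $h\in I_i$, so this is a well-defined deviation at the information set).

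For the one-shot analysis I would enumerate the prescribed actions. For the Round~1 move choice, any deviation to fixed valid moves induces the \emph{same} distribution over the set of neighbors $i$ beats---against honest uniform opponents each neighbor is beaten independently with probability $1/3$ regardless of $i$'s move---so the continuation value is unchanged; deviating to an invalid or absent move $\phi$ both loses that game and sets the neighbor's \texttt{BeenCheated} flag, after which Lemma~\ref{lem:rps_been_cheated} caps $i$ at $0\le$ following. For the output actions I would exploit the Round~3 retaliation structure: the only route to utility $v_i$ is to output $1$ in an iteration in which every still-undecided neighbor lost to $i$, which is precisely the event in which the algorithm itself outputs $1$; outputting $1$ without having beaten some undecided neighbor $j$ forces the honest $j$ to output $\bot$, so $\bot\in\text{out}(N(i))$ and $i$'s utility collapses to $0$ by the first case of Equation~\ref{eq:utility}, while outputting $0$ prematurely yields $-\infty$ and outputting $\bot$ yields $0$. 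In every sub-case the deviation's value is at most that of following, which is $\ge 0$ by Lemma~\ref{lem:rps_stick_to_algo}. Since $i$ can neither bias a single RPS game nor manufacture an unearned win, no deviation raises the probability of ever reaching the ``win-all'' event, so the overall success probability is maximized by honest play.

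I expect the main obstacle to be the rigorous passage from one-shot to arbitrary multi-round, adaptive deviations, together with belief independence, in the presence of an unbounded (possibly infinite) horizon where perpetual ties give utility $0$. I would discharge this either by verifying the hypotheses of the one-shot deviation principle directly at the per-history level---noting that the game is continuous at infinity because the only infinite histories are all-tie histories with payoff $0$, the infimum of $i$'s attainable nonnegative payoffs---or, alternatively, by an explicit invariant showing that at every reachable history the probability that $i$ eventually obtains $v_i$ is at most its value under honest play, with the two lemmas supplying the base cases (a cheated neighbor, or an already-committed output). The delicate points to get right are the exact timing of the $\bot$-retaliation, so that a neighbor $i$ fails to beat indeed commits $\bot$ in the same terminal configuration that $i$ observes, and the bookkeeping of which neighbors remain undecided as the iterations proceed.
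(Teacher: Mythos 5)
Your proposal is correct and follows essentially the same route as the paper: a per-history (hence belief-independent) argument built on Lemmas~\ref{lem:rps_been_cheated} and~\ref{lem:rps_stick_to_algo}, a case split on whether a neighbor's \texttt{BeenCheated} flag is already set, and the key observations that uniform RPS play cannot be exploited, that detectable deviations trigger the $\bot$-retaliation capping the deviator's payoff at $0$, and that premature outputs yield at most $0$ while honest play yields at least $0$. The only difference is presentational: you route the multi-round argument through the one-shot deviation principle and continuity at infinity, whereas the paper argues directly (and somewhat more informally) about arbitrary continuation deviations split into detectable and undetectable ones.
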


\begin{proof}
Consider an arbitrary history $h$ with $i \in P(h)$. Let $h_i$ denote the local history of node $i$, which implies $h \in I_i(h_i) \in \mathcal{I}_i$. We show that the continuation strategy prescribed for $i$ by Algorithm~\ref{alg:rps_dev} from information set $I_i(h_i)$ is sequentially rational irrespective of the \(i\)'s belief over $I_i(h_i)$, i.e., it gives the maximum expected payoff when all other nodes also follow Algorithm~\ref{alg:rps_dev} from $h$. It follows that the strategy constitutes a sequential equilibrium for any belief system of \(i\).
 and it follows that the strategy constitutes a 
\EQ.
As we mentioned in Section~\ref{sec:model}, we do not consider internal deviations except for randomness generation.
We distinguish deviations as:
\begin{itemize}
\item \emph{Undetectable deviations:} for example, choosing RPS moves with nonuniform
probabilities. Since every sequence of moves is still possible under the prescribed
uniform strategy (albeit with small probability), neighbors can never rule out a deviation
with certainty, even after many observations. Similarly, voluntarily outputting \(0\) or
\(\bot\) earlier than prescribed may also be indistinguishable, since it could be a valid
response to another neighbor having output \(1\), which remaining neighbors may not
observe.
  \item \emph{Detectable deviations:} e.g., failing to send a required/valid message,
  sending inconsistent messages, or outputting \(1\) when not eligible. Such deviations
  are caught by at least one neighbor, upon which the neighbor sets its \texttt{BeenCheated} flag to \(1\).
\end{itemize}

\noindent
\textbf{Case analysis.}

\smallskip
\emph{Case A: At \(h\), \(i\) has a decided neighbor with output \(1\) or \(\bot\).}
Then the maximum attainable payoff for \(i\) is \(0\). By Lemma~\ref{lem:rps_stick_to_algo},
following the algorithm yields a payoff of at least \(0\). Hence, no profitable deviation exists,
so the prescribed action is a best response.

\smallskip
\emph{Case B: At \(h\), \(i\) do not have a decided neighbor with output \(1\) or \(\bot\) and there exists an
undecided neighbor that \(i\) has cheated.}
If the remaining undecided neighbors follow the algorithm, Lemma~\ref{lem:rps_been_cheated}
implies the most \(i\) can obtain is \(0\); following the algorithm also yields payoff \(0\).
Thus, no profitable deviation exists.

\smallskip
\emph{Case C: At \(h\), \(i\) does not have a decided neighbor with output \(1\) or \(\bot\) and \(i\) has not cheated
any undecided neighbor.}
\begin{itemize}
  \item \emph{Detectable deviation by \(i\):} Some neighbor detects it and sets
  \texttt{BeenCheated}\(=1\). By Lemma~\ref{lem:rps_been_cheated}, the best payoff \(i\) can then secure is \(0\).
  By Lemma~\ref{lem:rps_stick_to_algo}, sticking to the algorithm yields a payoff of at least \(0\),
  so no profitable deviation exists.
  \item \emph{Undetectable deviation by \(i\):} (i) Changing the probabilities with which \(i\) plays rock, paper, scissors does not improve \(i\)'s win probability against neighbors who pick each move uniformly at random. Therefore, \(i\)'s chance
  of joining the MIS is unchanged, and so is the expected payoff. (ii) Voluntarily outputting
  \(0\) or \(\bot\) can yield a payoff at most \(0\), whereas Lemma~\ref{lem:rps_stick_to_algo} guarantees that following the algorithm yields payoff at least \(0\). Hence, no strictly profitable deviation exists.
\end{itemize}

Therefore, we have shown that for every history \(h\) with \(i\in P(h)\),
the prescribed strategy for \(i\) based on \(h_i = \mathrm{proj}_i(h)\) gives maximum expected utility possible from \(h\) when all other node are following the algorithm from \(h\), which proves the theorem.
\end{proof}

We have shown that Algorithm \ref{alg:rps_dev} forms a \EQ. To complete the proof of Theorem~\ref{thm:RPS_main_theorem}, we now assume that all nodes follow the algorithm without deviation and, under this honest execution, we analyze the simplified version of the algorithm (with deviation-handling logic removed) and proceed to prove its correctness, termination, and nonzero inclusion probability.

\subsection{Strategy Algorithm under No Deviations}
\label{sec:rps-nodev}
Since we have established in Theorem~\ref {thm:rps_eq} that Strategy Algorithm~\ref{alg:rps_dev} constitutes a Belief-Independent Sequential Equilibrium, no node has a unilateral incentive to deviate. When no node deviates from the Strategy Algorithm~\ref{alg:rps_dev}, its execution is equivalent to running Algorithm~\ref{alg:rps_nodev}, which is simply Strategy Algorithm~\ref{alg:rps_dev} with the deviation handling logic removed.

\begin{algorithm}
\caption{\label{alg:rps_nodev} RPS based MIS strategy algorithm for node \(i\) without deviations}
\SetKwBlock{Init}{Initialization}{}
\Init{
    Initialize node \( i \) as undecided\;

    \texttt{UndecidedNeighbors} \( \gets N(i) \)\;
    }

\While{true}{
    \tcc{Round 1: Play RPS with Undecided Neighbors}
    \ForEach{\( j \in \texttt{UndecidedNeighbors} \)}{
        Choose strategy \( s_i(j) \in \{\text{r}, \text{p}, \text{s}\} \) uniformly at random and send to \( j \)\;
    }
    \tcc{Round 2: Join MIS if possible}
    \If{\( rps_{i, j}(s_i(j),s_j(i)) = i \) for all \( j \in \texttt{UndecidedNeighbors} \)}{
        Output \( 1 \) (join MIS and terminate)\;
    }
    \tcc{Round 3: Observe Undecided Neighbors' Outputs}
    \If{any \( j \in \texttt{UndecidedNeighbors} \) has output \( 1 \)}{
                Output \( 0 \) and terminate\;
        }
     Remove all \( j \in \texttt{UndecidedNeighbors} \) who output \( 0 \)\;
     \If{\emph{\texttt{UndecidedNeighbors}} is empty}{
        Output \(1\) (join MIS and terminate)\;
        }
     }

\end{algorithm}

We now proceed to prove the remaining claims of Theorem~\ref{thm:RPS_main_theorem} on this deviation-free execution: namely, correctness (that the output forms a valid MIS), termination (with high probability), and positive inclusion probability (every node has a nonzero chance of joining the MIS).

\begin{theorem}[Termination]\label{thm:rps_termination} Algorithm \ref{alg:rps_nodev} terminates in \(O(3^{4\Delta}\log n)\) rounds with high probability on graphs with maximum degree \(\Delta\).    
\end{theorem}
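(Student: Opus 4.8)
The plan is to prove termination by a per-node geometric argument followed by a union bound over the $n$ nodes. First I would fix an arbitrary node $v$ and track only whether $v$ is \emph{undecided} or \emph{decided} at the start of each iteration (recall that each iteration of Algorithm~\ref{alg:rps_nodev} spans three rounds). The goal is to show that in every iteration in which $v$ is still undecided, $v$ becomes decided with probability bounded below by a quantity depending only on $\Delta$, uniformly over the past history; a geometric tail bound then controls the number of iterations $v$ survives, and a union bound over all nodes yields the high-probability round bound.

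The key step is the following per-iteration progress claim: conditioned on any history up to the start of an iteration in which $v$ is undecided, the probability that $v$ outputs $1$ in Round~2 of that iteration is at least $3^{-\Delta}$. To see this, let $S$ be the current set of undecided neighbors of $v$, so $|S| \le \Delta$. In Round~1 every node, including $v$ and each $j \in S$, draws its RPS moves uniformly at random and afresh, independently of everything in the past. By Definition~\ref{rps_function}, for each neighbor $j$ the event that $v$ wins that single game has probability exactly $1/3$, and across distinct neighbors these events depend on disjoint, independent pairs of moves, so $\Pr[v \text{ wins all its games}] = (1/3)^{|S|} \ge 3^{-\Delta}$. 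When $v$ wins every game, Round~2 of Algorithm~\ref{alg:rps_nodev} instructs $v$ to output $1$ and terminate, so $v$ indeed becomes decided. The crucial point is that this lower bound is \emph{independent of the conditioning}: the history determines only the set $S$, whose size never exceeds $\Delta$, and the fresh uniform moves keep the win probability at least $3^{-\Delta}$ regardless.

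Given the progress claim, I would finish as follows. The number of iterations until $v$ is decided is stochastically dominated by a geometric random variable with success probability $3^{-\Delta}$, so $\Pr[v \text{ undecided after } T \text{ iterations}] \le (1 - 3^{-\Delta})^T \le \exp(-T\,3^{-\Delta})$. Choosing $T = c\,3^{\Delta}\ln n$ for a suitable constant $c$ makes this at most $n^{-2}$; a union bound over the $n$ nodes shows that with probability at least $1 - n^{-1}$ every node is decided within $T$ iterations, i.e.\ within $3T = O(3^{\Delta}\log n)$ rounds. Since $3^{\Delta} \le 3^{4\Delta}$, this already implies the claimed $O(3^{4\Delta}\log n)$ bound (and in fact a stronger one).

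The main obstacle I anticipate is the conditioning argument that makes the geometric domination rigorous: one must argue that the per-iteration success probability stays at least $3^{-\Delta}$ conditioned on \emph{every} reachable history in which $v$ is still undecided, rather than merely in expectation or unconditionally. This is where it matters that the RPS moves are regenerated independently each iteration, so that the past influences the current iteration only through the identity of the undecided-neighbor set $S$, and shrinking $S$ only helps. A secondary point to handle carefully is the correlation between $v$'s fate and its neighbors' fates across iterations; the per-node formulation sidesteps this by lower-bounding progress using solely the event that $v$ wins all its games, which is self-contained and requires no reasoning about whether any neighbor simultaneously joins.
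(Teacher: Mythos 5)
Your proof is correct, but it takes a genuinely different route from the paper's. The paper follows a Luby-style edge-counting analysis: it defines indicator variables $X_{ij}$ for the event that $i$ wins all its games and $j$ wins all but the one against $i$, shows $\mathbb{E}[Y]\ge 2\alpha|E|$ for the number $Y$ of removed edges with $\alpha=\Theta(3^{-2\Delta})$, applies a reverse Markov inequality to get a per-iteration success probability of $\alpha$, and then a Chernoff bound over $\Theta(\alpha^{-2}\log n)$ iterations -- which is where the exponent $4\Delta$ comes from. Your argument instead lower-bounds, uniformly over the conditioning history, the probability $3^{-|S|}\ge 3^{-\Delta}$ that a fixed undecided node wins all its fresh, independent RPS games and hence outputs $1$ in Round~2, obtains a geometric tail per node, and union-bounds over the $n$ nodes. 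This is both simpler and quantitatively stronger: it yields $O(3^{\Delta}\log n)$ rounds rather than $O(3^{4\Delta}\log n)$, and since the theorem only claims the weaker bound, your proof establishes it. The per-node route is available here precisely because, unlike in Luby's algorithm, every undecided node has a history-independent lower bound on the probability of joining the MIS \emph{itself} in each iteration (this is essentially the content of Theorem~\ref{thm:rps_ppi}), so no reasoning about neighbors removing it, and no global progress measure, is needed. Your handling of the conditioning (fresh uniform moves each iteration, the past entering only through the identity of $S$, and distinct neighbors contributing disjoint independent move pairs) is the right way to make the geometric domination rigorous.
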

\begin{proof}
    Let \( Y \) denote the number of edges removed in a given iteration, and for each edge \( \{i,j\} \in E \), define $X_{ij}$ to be an indicator variable for the event that $i$ wins with all its neighbors and $j$ wins with all its neighbors except for $i$. Let \(d_i\) and \(d_j\) be the number of undecided neighbors of \(i\) and \(j\) respectively. As $i$ needs to win a total of \(d_i\) games and $j$ needs to win a total of \(d_j-1\) games, we have,
\(
\Pr[X_{ij} = 1] = 1/(3^{d_i + d_j-1})
\)

We now relate this to the total number of edges that have been removed. Let \( Y \) be the total number of removed edges, and let both \( Y_{ij} \) and \( Y_{ji} \) denote the removal of edge \(\{i,j\}\). For all vertices \(i\in N(j)\) there can only be one node \(k\) with \(X_{kj}=1\), and if there is such a node \(k\), then \(Y_{ij}=1\) for all \(i\in N(j)\) (as \(k\) will join the MIS and cause \(j\) to be delete and hence all edges incident on \(j\) will be deleted).
Therefore, \(\sum_{i\in N(j)}d_jX_{ij}\leq\sum_{i\in N(j)}Y_{ij}\).
Now summing over all the vertices gives
\[\sum_{j\in V}\sum_{i\in N(j)}d_jX_{ij}\leq \sum_{j\in V}\sum_{i\in N(j)}Y_{ij} \implies\sum_{ij\in E}(d_jX_{ij}+d_iX_{ji})\leq \sum_{ij\in E}2Y_{ij}\]

Taking the expectation on both sides,
\[
2 \mathbb{E}[Y] \geq \sum_{ij \in E} \left( d_j \cdot \mathbb{E}[X_{ij}] + d_i \cdot \mathbb{E}[X_{ji}] \right)
 \geq \sum_{ij \in E} \left( \frac{3 d_j}{3^{d_i + d_j}} + \frac{3 d_i}{3^{d_j + d_i}} \right) = \sum_{ij \in E} \frac{3(d_i + d_j)}{3^{d_i + d_j}}
\]

Let \( \Delta \) be the maximum degree in the graph. Since \( d_i, d_j \leq \Delta \) and \( d_i + d_j \geq 2\), we get
\[
\mathbb{E}[Y] \geq \frac{3}{2} \sum_{ij \in E} \frac{d_i + d_j}{3^{2\Delta}} = \frac{3}{2 \cdot 3^{2\Delta}} \sum_{ij \in E} (d_i + d_j) \geq \frac{3}{2 \cdot 3^{2\Delta}} \cdot \sum_{ij \in E}2 = \frac{3|E|}{3^{2\Delta}}
\]

Let \( \alpha = {3}/{(2 \cdot 3^{2\Delta})}\). Therefore \(\mathbb{E}[Y] \geq 2c|E|\) . Using Lemma~\ref{Markov}, with \( d = \alpha|E| \) and \( a = |E| \):
\[
\Pr\left(Y \geq \alpha|E| \right) \geq \frac{\mathbb{E}[Y] - \alpha|E|}{|E| - \alpha|E|}
\geq \frac{\alpha}{1 - \alpha}
\ge \alpha
\]

Thus, with probability at least \(\alpha = {3}/{(2 \cdot 3^{2\Delta})} \), at least an \(\alpha\)-fraction of the edges are removed in a single iteration. Let \(E' \) be the set of edges remaining after the iteration, we have \( |E'| \leq (1 - \alpha) |E|\) with probability at least \(\alpha\). 

To reduce the number of edges from \(|E| \le n^2\) to 0, it suffices to have \( r \) successful iterations of edges reducing by factor of \((1 - \alpha)\) such that \( n^2 \cdot \left(1 - \alpha\right)^r \le n^2 \cdot e^{-\alpha r} < 1 \) which implies \(r > (2/\alpha)\ln n\). Now we will calculate the number of trials $T$ required to get $r$ successful iterations with high probability.

For each iteration \(1 \le t \le T\), let $Z_t$ be an indicator variable denoting the event that at least a $(1-\alpha)$-fraction of edges were removed in iteration $t$. Let \( Z = \sum_{t=1}^T Z_t\) be the total number of successful rounds out of \( T \) total rounds. By linearity of expectations, we have \( \mathbb{E}[Z] \geq \alpha T\). Now consider \( T = (4c/\alpha^2) \log n \) for some constant $c>0$.  
We have \(\mathbb{E}[Z] \geq (4c/\alpha) \log n\). Applying the Chernoff bound~\ref{chernoff} with $\delta=1/2$, we obtain
\[
\Pr\left(Z \leq (2c/\alpha) \log n \right) 
    \leq \exp\!\left(-\frac{4c \log n}{8\alpha} \right) \ll n^{-c}
\]

So, if we have $T$ trials, we get that whp, more than $(2c/\alpha) \log n$ successful iterations, which implies that all edges are removed and the algorithm terminates. Therefore, the algorithm terminates in $O(3^{4\Delta}\log n)$ rounds with high probability, which proves the lemma.
\end{proof}

\begin{theorem}[Correctness] \label{thm:rps_correctness}
When all nodes follow Algorithm~\ref{alg:rps_nodev}, after termination, the set of nodes that output \(1\) forms a valid Maximal Independent Set (MIS) of the network graph, and all remaining nodes output \(0\).
\end{theorem}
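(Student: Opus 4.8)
The plan is to verify the three defining properties of a maximal independent set separately: that every node terminates with an output in $\{0,1\}$, that the set $S$ of nodes that output $1$ is independent, and that $S$ is dominating (hence maximal). First I would note that Algorithm~\ref{alg:rps_nodev} contains no instruction to output $\bot$ and that each of its three terminating actions outputs a value in $\{0,1\}$. Therefore, in the post-termination configuration (whose existence with high probability is guaranteed by Theorem~\ref{thm:rps_termination}), the vertex set is partitioned into nodes that output $1$ and nodes that output $0$, which immediately settles the ``all remaining nodes output $0$'' part of the claim.

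For independence, I would isolate the structural fact that a node outputs $1$ in exactly two situations: (a) in Round 2, after winning its RPS game against every \texttt{UndecidedNeighbor}, or (b) in Round 3, once its \texttt{UndecidedNeighbors} set becomes empty. The central invariant I would prove is: whenever a node $i$ outputs $1$ in some iteration $t$, every neighbor of $i$ that is undecided at the start of $t$ outputs $0$ during iteration $t$, and no neighbor of $i$ outputs $1$. In case (a), any neighbor $j$ still undecided at the start of $t$ retains $i$ in its \texttt{UndecidedNeighbors}, so in Round 3 of $t$ it observes $i$'s output of $1$ and is instructed to output $0$; moreover $j$ could not have output $1$ in Round 2 of $t$, since on edge $\{i,j\}$ the function $\mathrm{rps}_{i,j}$ has at most one winner, and $i$ won, so $j$ fails the ``win against all neighbors'' test. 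In case (b), $i$ can empty its \texttt{UndecidedNeighbors} only if every such neighbor output $0$ in that iteration. Finally, any neighbor of $i$ that decided before iteration $t$ cannot have output $1$, for otherwise $i$ would itself have output $0$ upon observing it (Round 3 rule) and could not still be undecided at $t$. Hence no neighbor of a node in $S$ lies in $S$, so $S$ is independent.

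For domination and maximality, I would observe that the unique line of Algorithm~\ref{alg:rps_nodev} producing output $0$ is the Round 3 rule triggered by seeing an undecided neighbor that output $1$; since outputs are irrevocable, that neighbor belongs to $S$. Thus every $0$-node is dominated by a node of $S$, and together with independence this shows $S$ is a maximal independent set.

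The hard part will be the independence case analysis, which must correctly handle the synchronous semantics of Round 3 (where many nodes may output $0$ simultaneously while one node checks whether its \texttt{UndecidedNeighbors} has just emptied) and the fact that two adjacent nodes may decide in different iterations. The cleanest approach I anticipate is to fix the first iteration in which either endpoint of an edge $\{i,j\}$ outputs $1$ and argue that the other endpoint is forced to output $0$ in that same iteration, reducing the analysis to the single-iteration invariant above; the at-most-one-winner property of $\mathrm{rps}_{i,j}$ is what rules out both endpoints outputting $1$ in Round 2 of a common iteration.
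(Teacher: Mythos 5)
Your proposal is correct and follows essentially the same route as the paper's proof: independence via the at-most-one-winner property of $\mathrm{rps}_{i,j}$ plus the forced $0$-output of neighbors of a node that outputs $1$, and maximality from the fact that the only instruction producing output $0$ is triggered by observing a neighbor output $1$. Your treatment is somewhat more careful than the paper's (explicitly separating the two ways a node can output $1$ and handling neighbors that decided in earlier iterations), but the underlying argument is identical.
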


\begin{proof}
We now prove that the output of Algorithm \ref{alg:rps_nodev} satisfies both the independence and maximality conditions of a Maximal Independent Set (MIS).
\begin{itemize}
    \item \emph{Independence:} A node \(i\) outputs \(1\) (i.e., joins the MIS) only if it wins the Rock-Paper-Scissors (RPS) game against all its currently undecided neighbors, given that all its decided neighbors have output only \(0\). In no case, two neighboring nodes \( i \) and \( j \) both win against each other in the same iteration. Therefore, no two adjacent nodes can simultaneously output \(1\), and once a node 
outputs \(1\), its neighbors are required to output \(0\) in the next round, 
thereby ensuring independence.

\item \emph{Maximality:} A node \( i \) outputs \( 0 \) only in Round 3 and only if at least one of its neighbors has output \( 1 \). Moreover, each node continues participating until it outputs either \( 0 \), \( 1 \), or \( \bot \). However, under the assumption that all nodes follow Algorithm~\ref{alg:rps_nodev} honestly, no node ever outputs \( \bot \) (as no node deviates). Thus, since termination is guaranteed with high probability by Theorem~\ref{thm:rps_termination}, every node eventually outputs \( 1 \) or \( 0 \), and every node that outputs \( 0 \) must be adjacent to a node in the MIS. Hence, the resulting set is maximal.
\end{itemize}

\end{proof}

\begin{theorem}[Positive Probability of MIS Inclusion]\label{thm:rps_ppi}
Under Algorithm \ref{alg:rps_nodev}, each node has a non-zero probability of eventually outputting \(1\), i.e., joining the MIS, in every round where none of its neighbors has output \(1\) 
\end{theorem}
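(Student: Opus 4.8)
The plan is to reduce the whole statement to a single-iteration winning bound. Fix a node \(i\) and consider any iteration at whose start none of \(i\)'s neighbors has output \(1\). First I would record the structural consequences of this hypothesis in Algorithm~\ref{alg:rps_nodev}: node \(i\) is necessarily still \textsc{Undecided} (in the deviation-free execution \(i\) outputs a value only by winning all its games, by having an empty neighbor list, or by seeing a neighbor output \(1\)), and every neighbor still in \texttt{UndecidedNeighbors} is genuinely undecided, since any neighbor that output \(0\) was removed from the list in a previous iteration. Let \(d_i = |\texttt{UndecidedNeighbors}| \le \Delta\). The point of this step is to certify that at this iteration \(i\) is free to attempt to join and will not be forced to output \(0\) first.

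Next I would dispose of the trivial case \(d_i = 0\): the ``\texttt{UndecidedNeighbors} is empty'' branch fires and \(i\) outputs \(1\) with probability \(1\). For \(d_i \ge 1\), I would compute the probability that \(i\) wins all its RPS games in Round~1. Since \(i\) and each undecided neighbor \(j\) each choose a move from \(\{r,p,s\}\) uniformly and independently, for any fixed move of \(j\) exactly one move of \(i\) beats it, so \(\Pr[\mathrm{rps}_{i,j}(s_i(j),s_j(i)) = i] = 1/3\) by Definition~\ref{rps_function}. Independence across the \(d_i\) neighbors then yields \(\Pr[i \text{ beats all its undecided neighbors}] = 3^{-d_i}\).

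The key observation to close the argument is that this ``win everything'' event actually makes \(i\) join the MIS cleanly. If \(i\) beats all its undecided neighbors, then by the iteration's hypothesis (no neighbor has output \(1\)) and the no-deviation assumption (no neighbor ever outputs \(\bot\)), the Round~2 join condition is satisfied and \(i\) outputs \(1\); moreover every such neighbor \emph{loses} to \(i\), so no neighbor simultaneously wins all of its own games, and hence none of them outputs \(1\) in the same round. Therefore \(i\) joins the MIS in this very iteration with probability at least \(3^{-d_i} \ge 3^{-\Delta} > 0\), which a fortiori gives a positive probability of \emph{eventually} outputting \(1\), as required.

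I do not expect a genuine obstacle here: the probabilistic core is just the elementary \(3^{-d_i}\) bound coming from uniform, independent RPS moves. The only part needing a little care is the bookkeeping of the first paragraph—verifying that ``none of \(i\)'s neighbors has output \(1\)'' leaves \(i\) undecided with a well-defined, all-undecided neighbor list and does not trigger any premature Round~3 output of \(0\)—so that the Round~2 join condition is genuinely reachable. Once that is in place, the positive lower bound \(3^{-\Delta}\) on joining in a single eligible iteration immediately establishes the theorem.
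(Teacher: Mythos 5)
Your proposal is correct and follows essentially the same route as the paper's own proof: the core is the observation that $i$ wins each independent uniform RPS game with probability $1/3$, hence wins all $d_i$ games and joins with probability $(1/3)^{d_i} > 0$. Your additional bookkeeping (the $d_i = 0$ case and verifying that the all-win event actually triggers the Round~2 join) is a slightly more careful write-up of the same argument, not a different approach.
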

\begin{proof}
In any iteration where a node \(i\) is still undecided and none of its neighbors has output \(1\), it plays the RPS game with all of its undecided neighbors. Since the outcome of RPS is uniformly random, the probability that \(i\) wins against any particular neighbor is \(\tfrac{1}{3}\). Consequently, the probability that \(i\) wins against all of its undecided neighbors and joins the MIS in that iteration is \(\left(\tfrac{1}{3}\right)^{d_i}\), where \(d_i\) denotes the number of undecided neighbors of \(i\). Thus, as long as none of its neighbors has output \(1\), node \(i\) always has a strictly positive probability of joining the MIS.
\end{proof}

\section{A Faster Strategy Algorithm using Cryptography}
\label{sec:mis-crypto}
The RPS-based algorithm described in the previous section is not very efficient on graphs with high-degree nodes. The exponential dependence on degree arises because each node plays independent RPS games with its neighbors and can only join the MIS if it wins \emph{all} of them.
To improve the round complexity, we require an algorithm where each node plays a single game with all of its neighbors, under the condition that if a node wins, none of its neighbors can win. The algorithm by Métivier et al.~\cite{Metivier}, can be viewed as such a game: each node generates a random rank, and a node ``wins'' if its rank is lower than that of all its neighbors, in which case it joins the MIS. However, as discussed in Section~\ref{sec:challenges}, this algorithm by itself fails in rational settings because nodes can bias their rank selection to gain an advantage.

A natural attempt to get around this obstacle is to have each node's rank also depend on inputs from its neighbors. For instance, if rank of node \(i\) is computed using randomness from both \( i \) and an arbitrarily chosen neighbor \( j \), then neither \(i\) nor \(j\) can fully influence the outcome: \( i \) prefers a low rank in order to join the MIS, while \( j \) prefers \(i\) to have a high rank in order to block it from joining the MIS. This idea of \emph{two-party shared randomness} ensures that the ranks will be \emph{truly random} and cannot be changed unilaterally by any single party.
To make this verifiable, nodes must prove the correctness of their computed rank to others. This requires sharing the opponent's input, which must be authenticated by some means. We perform authentication using digital signatures, which requires standard cryptographic assumptions.

\paragraph*{Additional Assumptions} 
As mentioned, our approach relies on cryptographic tools; we assume that each node \(i\) is equipped with a public–private key pair and can sign messages. In order to verify the digital signatures, all nodes are assumed to know the identifiers and public keys of every other node in the network. Further, nodes can only broadcast to their neighbors.\footnote{The algorithm can, in principle, be extended to support unicast communication. However, doing so introduces several corner cases that would require substantially more discussion. One such case arises when a node can avoid detection by selecting multiple opponents simultaneously. Although this strategy does not yield a higher expected payoff than choosing a single opponent, we must still account for it, since the algorithm is required to prescribe the best continuation strategy for every possible history.}

\subsection{Rationality Resilient Rank-based Strategy Algorithm}
The algorithm proceeds in \emph{iterations}, where each iteration consists of \emph{five rounds}. Initially, all nodes are in the \texttt{undecided} state. Once a node outputs a value in \( \{0,1,\bot\} \), it becomes \texttt{decided}, meaning that its output is visible to all neighbors and it takes no further action in subsequent rounds. Each node maintains a \texttt{BeenCheated} flag, initialized to \texttt{false}, and in every iteration, the algorithm proceeds through the following rounds:
\begin{itemize}
    \item \textbf{Round 1: Opponent Selection.}
Each node \( i \) selects an arbitrary undecided neighbor as its \emph{opponent}, denoted by opp(i), and broadcasts this choice to all neighbors.
\item \textbf{Round 2 (Randomness Generation and Exchange).}
Each node \(i \) generates a \(c\log n\) length bit string \( r_{i \rightarrow i} \in \{0,1\}^{c\log n} \) uniformly at random and broadcasts it to all neighbors, where $c > 0$ is a fixed constant. For each neighbor \( j \) who selected \( i \) as an opponent in Round 1, node \( i \) generates a separate \(c\log n\) length bit string \( r_{i \rightarrow j} \in \{0,1\}^{c\log n} \) uniformly at random, signs it along with the current iteration counter, and broadcasts each signed message.

\item \textbf{Round 3: Share Randomness of Opponent} Each node \(i\) that broadcast \(\operatorname{opp}(i)\) in Round~1 and received \(r_{\operatorname{opp}(i) \rightarrow i}\) 
computes its rank as 
\(R_i = r_{i \rightarrow i} \oplus r_{\operatorname{opp}(i) \rightarrow i}\), 
and then broadcasts the signed message it received from \(\operatorname{opp}(i)\) to all its neighbors.

\item \textbf{Round 4: Calculate Ranks of Neighbors.} 
Each node \(i\), for each of its neighbors \(j\), after receiving a valid signed 
\(r_{\mathrm{opp}(j) \rightarrow j}\) in the previous round, 
computes the rank 
\(R_j = r_{j \rightarrow j} \oplus r_{\mathrm{opp}(j) \rightarrow j}\). If any undecided neighbor \(j\) did not broadcast some message in the previous 3 rounds as expected, 
then \(i\) assumes \(j\)’s rank to be \(\{1\}^{c \log n}\) and sets its \texttt{BeenCheated} flag to \texttt{true}, 
since such nodes are considered to have deviated from the algorithm.
Similarly, if \(i\) deviated in a manner that can be detected by some of its neighbors(this includes the case where \(i\) fails to broadcast 
\(r_{\mathrm{opp}(i) \rightarrow i}\) because it did not receive it from 
\(\mathrm{opp}(i)\)), 
then it sets its own rank to \(\{1\}^{c \log n}\) in order to stay consistent with neighbors that detect \(i\)'s deviation, and avoid a potential \(-\infty\) payoff.

Next, node \(i\) checks whether any neighbor has already output \(1\) or \(\bot\) in the current iteration; 
if so, it outputs \(\bot\) in every subsequent round in which it remains \texttt{Undecided}. . 
Otherwise, it compares its rank with all its undecided neighbors. 
If \(R_i < R_j\) for all such neighbors \(j\), 
then node \(i\) outputs \(1\), joining the MIS and terminates.
\item \textbf{Round 5: Reacting to Neighbor Decisions.}
Each node \(i\) checks whether any undecided neighbor \(j\) has output \(\bot\). 
If such a node exists, \(i\) outputs \(\bot\) in every subsequent round in which it remains \texttt{Undecided}.

Else, if any  neighbor \(j\) has output \(1\) in previous round, 
then:
\begin{itemize}
    \item If, for all such \(j\) with output \(1\), it holds that \(R_j < R_i\), 
    then node \(i\) outputs \(0\) in every subsequent round while it remains undecided, 
    unless \texttt{BeenCheated} is \texttt{true}, in which case it outputs \(\bot\) in every subsequent round while it remains undecided..
    \item Otherwise, if there exists a node \(j\) with \(R_j > R_i\) that still outputs \(1\), 
    then \(i\) outputs \(\bot\) in every subsequent round while it remains undecided.
\end{itemize}

If there are no \texttt{Undecided} neighbors, \(i\) outputs \(1\). otherwise, \(i\) proceeds to next iteration.
\end{itemize}

\noindent
The distributed strategy algorithm described above is formalized in Algorithm \ref{alg:crypto_dev}. Since the description is quite lengthy, we have modularized it into several algorithms, each describing one round of the iteration, presented in Algorithms \ref{alg:opp_selection} -- \ref{alg:react_to_nbrs}. We then proceed to prove the main theorem, Theorem~\ref{thm:crypto_main}, which demonstrates that the proposed algorithm is rationality resilient.

\begin{algorithm}
\caption{ \label{alg:crypto_dev}Rank based MIS algorithm for Node \( i \)}
\SetKwBlock{Init}{Initialization}{}
\Init{
    \texttt{UndecidedNeighbors} \( \gets N(i) \);
    \texttt{BeenCheated} \( \gets \) false;
    \texttt{iteration\_number} \( \gets 0 \)
}
\If{\(N(i) = \emptyset\)}{
   In every round where node \(i\) is still \textsc{Undecided}, Output \(1\)
}
\While{\(i\) is \emph{\textsc{Undecided}}}{

    \texttt{iteration\_number} \( \gets \) \texttt{iteration\_number} \(+ 1 \)\;
    
    \tcp{--- Round 1 ---}
    Run Algorithm Opponent Selection (Algorithm \ref{alg:opp_selection}) \;
    
    \tcp{--- Round 2 ---}
    Run Algorithm Generate Randomness (Algorithm \ref{alg:gen_rand_ranks})\;
    
    \tcp{--- Round 3 ---}
    Run Algorithm Forward Opponent's random input (Algorithm \ref{alg:share_opp_rand})\;
    
    \tcp{--- Round 4 ---}
    Run Algorithm Calculate Ranks of Neighbors (Algorithm \ref{alg:nbr_ranks})\;
    
    \tcp{--- Round 5 ---}
    Run Algorithm Reacting to Neighbor Decisions (Algorithm \ref{alg:react_to_nbrs})\;
    }

\end{algorithm}

\begin{algorithm}
\caption{\label{alg:opp_selection} Opponent Selection}
      Choose any \( k \in \) \texttt{UndecidedNeighbors} as the opponent \( \mathrm{opp}(i) \), and broadcast this choice to all neighbors.
\end{algorithm}

\begin{algorithm}
\caption{\label{alg:gen_rand_ranks} Generate Randomness}
 \If{\(i\) did not broadcast a valid ID in Round 1}{
    
    $R_i \gets \{1\}^{c\log n}$;
    
    \tcp{Node \(i\) deviated; as neighbors assume its rank is \(\{1\}^{c \log n}\), \(i\) also adopts this to stay consistent and avoid \(-\infty\) payoff.}

      Draw \(c\log {n}\) bits \( r_{i \rightarrow i} \in \{0,1\}^{c\log n} \) uniformly at random and broadcast to all the neighbors\;
        \ForEach{ \( j \in \) \emph{\texttt{UndecidedNeighbors}} which broadcasted \( \mathrm{opp}(j) = i \) in previous round}
        {
        Draw a random \(c\log {n}\) bit \( r_{i \rightarrow j} \in \{0,1\}^{c\log n}\) uniformly at random and
        broadcast signed message \( (\texttt{iteration\_number}, i,j, r_{i \rightarrow j}) \) to  all neighbors;
        }
    }
\ForEach{\(j \in \) \emph{\texttt{UndecidedNeighbors}} that did not broadcast an opponent in Round 1}{
    Node \(i\) sets \(R_j \gets \{1\}^{c\log n}\) and \texttt{BeenCheated} \(\gets\) true\tcc*{as \(j\) deviated}
}
\end{algorithm}

\begin{algorithm}
\caption{\label{alg:share_opp_rand} Forward Opponent's random input}

    \If{ \(i\) broadcasted ID of \(\mathrm{opp}(i)\) as its opponent in Round 1}{
        \If{ valid signed \(r_{\mathrm{opp}(i) \rightarrow i}\) is recieved from \(\mathrm{opp}(i)\) in Round 2}
        {
            \( R_{i} \gets \{r_{i\rightarrow i}\oplus r_{\mathrm{opp}(i)\rightarrow i}\}\)\;
            Broadcast message \( (\texttt{iteration\_number}, \mathrm{opp}(i), r_{\mathrm{opp}(i) \rightarrow i}) \) signed by \(\mathrm{opp}(i)\) to all neighbors 
        }
        \Else{
            \texttt{BeenCheated} \( \gets \) true\tcc*{\(i\) was cheated by \(\mathrm{opp}(i)\)}
            \( R_{i} \gets \{1\}^{c\log n} \)\tcc*{as \(i\) does not have signed message from \(\mathrm{opp}(i)\), its neighbors will assign this rank to \(i\)}
        }
    }

\end{algorithm}
\begin{algorithm}
\caption{\label{alg:nbr_ranks} Calculate Ranks of Neighbors}
  \If{any \( j \in\) \emph{\texttt{UndecidedNeighbors}} has output \(1\) or \( \bot \) }{
        From this point onward, disregard the rest of the algorithm and output \( \bot \) in every round where node \(i\) is still \textsc{Undecided};
    }
    \Else{
        Remove all \( j \in  \texttt{UndecidedNeighbors} \) that output \(0\) from   \texttt{UndecidedNeighbors};
    } 
    \ForEach{\( j \in \texttt{UndecidedNeighbors} \)}{
        \If{message \( (\texttt{iteration\_number}, \mathrm{opp}(j), r_{\mathrm{opp}(j) \rightarrow j})\) signed by \(\mathrm{opp}(j)\) is received from \(j\)}{
        Node \(i\) sets \( R_j \gets \{r_{\mathrm{opp}(j) \rightarrow j} \oplus r_{j \rightarrow j} \}\)\;    
        }
        \Else{
        Node \(i\) sets \(R_j \gets \{1\}^{c\log n}\), \texttt{BeenCheated} \(\gets\) true\tcc*{as \(j\) deviated}
        }
    }

    \If{\( R_{i} < R_j \) for all \( j \in\) \emph{\texttt{UndecidedNeighbors}}}{
        Output \( 1 \) (join MIS)\;
    }
\end{algorithm}
\begin{algorithm}
\caption{\label{alg:react_to_nbrs} Reacting to Neighbor Decisions}
  \If{any node in  \emph{\texttt{UndecidedNeighbors}} has output \( 1 \)}{
        \If{for all such \( j \) with output 1, \( R_j < R_{i} \)}{
            From this point onward, disregard the rest of the algorithm; if \texttt{BeenCheated} is \textbf{true}, then output \( \bot \), else output \(0\), in every round where node \(i\) is still \textsc{Undecided}.
        }
        \Else{
            From this point onward, disregard the rest of the algorithm and Output \(\bot\) in every round where node \(i\) is still \textsc{Undecided}
        }
    }
    \Else{
        Remove all \( j \in \) \texttt{UndecidedNeighbors} who output \( 0 \)\\
        \If{\emph{\texttt{UndecidedNeighbors}} is empty}{
            From this point onward, disregard the rest of the algorithm and Output \(1\) in every round where node \(i\) is still \textsc{Undecided}  
        }
    }

\end{algorithm}

\begin{theorem}\label{thm:crypto_main}
Algorithm \ref{alg:crypto_dev} is a 
Rationality Resilient Algorithm as defined in Definition \ref{def:RR}.
\end{theorem}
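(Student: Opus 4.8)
The plan is to prove the four properties of Definition~\ref{def:RR} separately, mirroring the decomposition used for Theorem~\ref{thm:RPS_main_theorem}: a \EQ{} theorem (the analogue of Theorem~\ref{thm:rps_eq}), a termination theorem, a correctness theorem, and a positive-inclusion-probability theorem, which together establish rationality resilience. As before, the equilibrium is the substantial part, and I would support it with the direct analogues of Lemma~\ref{lem:rps_been_cheated} (a node whose undecided neighbor has its \texttt{BeenCheated} flag set can obtain utility at most $0$, because such a neighbor outputs only $1$ or $\bot$) and Lemma~\ref{lem:rps_stick_to_algo} (following the algorithm never yields $-\infty$).

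For the stick-to-algorithm lemma I would rule out the two fatal configurations: outputting $0$ with no neighbor outputting $1$ is impossible, since a node outputs $0$ only in Round~5 after seeing a neighbor output $1$; and outputting $1$ with an adjacent node also outputting $1$ is prevented by the strict rank comparison, since a node joins in Round~4 only when $R_i < R_j$ for every undecided neighbor $j$, so no neighbor $k$ can also satisfy $R_k < R_i$, even under a rank collision. The genuinely new content lies in the case where $i$ has no decided neighbor with output $1$ or $\bot$ and has cheated no one, when analysing \emph{undetectable} deviations. Here I must show that a rational $i$ cannot bias its rank downward. Since $R_i = r_{i\to i}\oplus r_{\mathrm{opp}(i)\to i}$ and the honest opponent draws $r_{\mathrm{opp}(i)\to i}$ uniformly, the XOR is uniform regardless of how $i$ picks $r_{i\to i}$, so $i$'s probability of being a local minimum is unchanged; and $i$ cannot instead \emph{report} a smaller rank, because proving a rank requires broadcasting the opponent's \emph{signed} message $(\texttt{iteration\_number},\mathrm{opp}(i),r_{\mathrm{opp}(i)\to i})$, which unforgeability of the signature scheme forbids. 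Substituting, omitting, or fabricating such a message is instead \emph{detectable}, triggers \texttt{BeenCheated} at a neighbor, and caps $i$'s payoff at $0$ by the been-cheated lemma, which the stick-to-algorithm lemma matches; voluntarily outputting $0$ or $\bot$ early likewise yields at most $0$. The case analysis over histories then closes exactly as in Theorem~\ref{thm:rps_eq}.

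The remaining three properties are comparatively routine. For termination I would first show that, for a suitable constant $c$, all ranks are distinct with high probability (a union bound over pairs of $c\log n$-bit strings), so an honest execution faithfully simulates one round of the algorithm of M\'etivier et al.~\cite{Metivier}; their analysis gives a constant expected fraction of edges removed per iteration, yielding $O(\log n)$ rounds with high probability by a Chernoff argument as in Theorem~\ref{thm:rps_termination}, now with no dependence on $\Delta$. Correctness follows as in Theorem~\ref{thm:rps_correctness}: independence from the strict rank comparison, and maximality because a node outputs $0$ only when adjacent to a node outputting $1$, combined with guaranteed termination and the fact that no honest node ever outputs $\bot$. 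Positive inclusion probability holds because, conditioned on no neighbor having output $1$, node $i$ draws the unique minimum rank in its undecided neighborhood with probability roughly $1/(d_i+1) > 0$.

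I expect the main obstacle to be a complete treatment of the enlarged deviation space in the equilibrium proof. Unlike the RPS algorithm, each iteration now has five rounds involving opponent selection, randomness exchange, signing, and forwarding, so there are many more possible deviations and correspondingly many more histories the strategy must address (the footnote on simultaneously selecting multiple opponents is one symptom of this). Making the cryptographic step fully rigorous — formalizing unforgeability, arguing that a computationally bounded node produces a valid forged signature only with negligible probability, and checking that this negligible slack never opens a strictly profitable deviation — is the part that demands the most care.
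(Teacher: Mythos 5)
Your proposal is correct and follows essentially the same route as the paper: the same decomposition into a \EQ{} theorem supported by the been-cheated and stick-to-algorithm lemmas, followed by termination via a rank-collision union bound and simulation of M\'etivier et al., correctness, and positive inclusion probability. If anything, your treatment of the undetectable-deviation case (XOR uniformity plus signature unforgeability) is more explicit than the paper's, whose corresponding passage still refers to ``RPS mixing probabilities'' as a leftover from the first algorithm.
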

We prove Theorem~\ref{thm:crypto_main} in three parts (Theorems~\ref{thm:crypto_eq}, \ref{thm:crypto_correctness_termination}, and \ref{thm:crypto_ppi}). Among these, Theorem~\ref{thm:crypto_eq} requires intermediate results, which we establish through Lemmas~\ref{lem:crypto_been_cheated} and~\ref{lem:crypto_stick_to_algo}.

\begin{lemma}\label{lem:crypto_been_cheated}
Let the game be at an arbitrary history $h$ with $i\in P(h)$, where $h \in I_i \in \mathcal{I}_i$. 
Suppose that one of $i$'s \texttt{UndecidedNeighbors} (say, $j$) has its \texttt{BeenCheated} flag set to \texttt{true}.  
Then, regardless of the history $h$ and of $i$'s strategy from $h$ onward, if all the other nodes follow Algorithm~\ref{alg:crypto_dev} from $h$ onward, the utility of node $i$ will be at most~$0$.
\end{lemma}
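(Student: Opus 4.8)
The plan is to mirror the proof of Lemma~\ref{lem:rps_been_cheated}, adapting it to the five-round structure of Algorithm~\ref{alg:crypto_dev}. The single fact I would establish first is that, throughout the entire protocol, node $j$ can emit the output $0$ at exactly one place, namely in Round~5 (Algorithm~\ref{alg:react_to_nbrs}), and even there only inside the branch where it has detected a neighbor with output $1$ whose rank is smaller than $R_j$ \emph{and} its own \texttt{BeenCheated} flag is \texttt{false}; in the corresponding branch with \texttt{BeenCheated} set to \texttt{true}, the prescribed output is $\bot$ rather than $0$. I would scan the remaining sub-algorithms (Algorithms~\ref{alg:gen_rand_ranks}, \ref{alg:share_opp_rand}, and~\ref{alg:nbr_ranks}) to confirm that none of them ever prescribes the output $0$, so that this is genuinely the only exit through $0$.

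The second step is to note the monotonicity of the flag: after initialization, \texttt{BeenCheated} is only ever assigned the value \texttt{true} (in Algorithms~\ref{alg:gen_rand_ranks}, \ref{alg:share_opp_rand}, and~\ref{alg:nbr_ranks}) and is never reset to \texttt{false}. Hence, since $j$'s flag is already \texttt{true} at the history $h$, it remains \texttt{true} at every later history, and so the only $0$-emitting branch is unreachable for $j$ from $h$ onward. I would emphasize that this conclusion is independent of $i$'s own strategy: whatever messages $i$ chooses to send, $j$ (which follows the algorithm) can never be steered into outputting $0$, because that branch is gated on a condition $j$ can no longer satisfy.

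The final step is a short case analysis on $j$'s eventual output, combined with the utility function of Equation~\ref{eq:utility}. If the history is infinite, every node's utility is $0$ by definition. Otherwise $j$, being unable to output $0$, must output either $1$ or $\bot$. If $j$ outputs $\bot$, then $\bot \in \text{out}(N(i))$, so the first (locally aborted) case of Equation~\ref{eq:utility} applies and $i$ receives $0$ regardless of its own output. If $j$ outputs $1$, then $1 \in \text{out}(N(i))$: the value $v_i$ case is excluded (it requires $1 \notin \text{out}(N(i))$), so $i$ can obtain at most $0$, attained only by outputting $0$ or $\bot$, while outputting $1$ yields $-\infty$. In every case $i$'s utility is at most $0$.

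I do not expect a genuine obstacle here; the argument is structurally identical to Lemma~\ref{lem:rps_been_cheated}. The only point requiring care is the exhaustive check that $0$ is emitted \emph{solely} in the \texttt{BeenCheated}-\texttt{false} branch of Round~5 across all five modular sub-algorithms, since the protocol is now split over Algorithms~\ref{alg:opp_selection}--\ref{alg:react_to_nbrs} rather than presented in a single block.
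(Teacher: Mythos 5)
Your proposal is correct and follows essentially the same route as the paper's proof: identify that $0$ can only be emitted in Round~5 under a \texttt{BeenCheated}-\texttt{false} guard, use the monotonicity of the flag to conclude $j$ outputs only $1$ or $\bot$, and then read off $i$'s utility from Equation~\ref{eq:utility}. Your version is in fact slightly more careful than the paper's (which asserts the utility \emph{is} $0$ rather than \emph{at most} $0$), since you explicitly handle the infinite-history case and the $-\infty$ outcome when $i$ also outputs $1$.
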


\begin{proof}
    According to Algorithm \ref{alg:crypto_dev}, node \( j \) outputs \( 0 \) only in Round 5 of an iteration, and even then, only if its \texttt{BeenCheated} flag is \texttt{false}. Since the \texttt{BeenCheated} flag, once set to \texttt{true}, is never reset to \texttt{false}, a node with \texttt{BeenCheated} = \texttt{true} will never output \( 0 \). Therefore, node \( j \) will only output \( 1 \) or \( \bot \). In either case, the utility of \(i\) is \( 0 \).
\end{proof}

\begin{lemma}\label{lem:crypto_stick_to_algo}
    Let the game be at an arbitrary history $h$ with $ i\in P(h)$, where $h \in I_i \in \mathcal{I}_i$, and let \( h_i \) be the local history of node \( i \). 
    Then, provided that all the undecided nodes follow Algorithm~\ref{alg:crypto_dev} from this point onward, the utility of node \( i \) is at least \( 0 \).
\end{lemma}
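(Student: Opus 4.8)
The plan is to mirror the proof of Lemma~\ref{lem:rps_stick_to_algo}, reducing the claim to ruling out the only two local configurations in which node \(i\) can receive utility \(-\infty\) under Equation~\ref{eq:utility}: \emph{(Case 1)} \(i\) outputs \(0\) while no neighbor outputs \(1\) and none outputs \(\bot\); and \emph{(Case 2)} \(i\) outputs \(1\) while some neighbor also outputs \(1\) and none outputs \(\bot\). In every other local configuration, Equation~\ref{eq:utility} returns a value in \(\{0, v_i\}\), so it suffices to show that, when all undecided nodes follow Algorithm~\ref{alg:crypto_dev} from \(h\) onward, neither case can arise.

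\emph{Case 1} I would dispose of by inspecting the continuation logic. The only branch of Algorithm~\ref{alg:react_to_nbrs} that instructs \(i\) to output \(0\) is entered after \(i\) has observed some undecided neighbor output \(1\) (with its \texttt{BeenCheated} flag false). Hence whenever \(i\) outputs \(0\), at least one neighbor has output \(1\), placing the configuration in the second clause of Equation~\ref{eq:utility} with utility \(0\); so the \(-\infty\) configuration of Case 1 never occurs.

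The substance is \emph{Case 2}. Under Algorithm~\ref{alg:crypto_dev}, node \(i\) outputs \(1\) only (a) in Round 4 (Algorithm~\ref{alg:nbr_ranks}) when \(R_i < R_j\) for every undecided neighbor \(j\), or (b) in Round 5 (Algorithm~\ref{alg:react_to_nbrs}) when \(i\) has no undecided neighbors remaining. Subcase (b) is immediate: reaching the empty-\texttt{UndecidedNeighbors} branch means every neighbor was already removed after outputting \(0\), so none outputs \(1\). For subcase (a), I would first establish \emph{rank consistency}: when all nodes follow the algorithm, the rank \(R_j\) that \(i\) computes for a neighbor \(j\) from the broadcast \(r_{j \to j}\) and the forwarded signed contribution \(r_{\mathrm{opp}(j)\to j}\) of \(j\)'s opponent coincides with the rank \(j\) assigns itself in Algorithm~\ref{alg:share_opp_rand}. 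Granting this, \(R_i < R_j\) gives \(R_j \not< R_i\), so \(j\) fails its own Round 4 winning test; and once \(i\) has output \(1\), in Round 5 node \(j\) sees a neighbor (namely \(i\)) with output \(1\) and rank \(R_i < R_j\), so Algorithm~\ref{alg:react_to_nbrs} directs \(j\) to output \(0\) or \(\bot\), never \(1\). Because the strict inequality \(R_i < R_j\) rules out ties, no neighbor of \(i\) can also output \(1\), contradicting Case 2. This is precisely the rank-based analogue of the ``no two mutual winners'' property that the function \(\mathrm{rps}_{i,j}\) supplies in Lemma~\ref{lem:rps_stick_to_algo}.

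The main obstacle I expect is formalizing rank consistency, which is where this proof genuinely departs from the RPS argument. I need that (i) honest broadcasting of \(r_{j\to j}\) in Round 2 and honest forwarding of \(\mathrm{opp}(j)\)'s signed contribution in Round 3 make every party compute the same \(R_j\), and (ii) no node can be credited with two conflicting ranks, which rests on the unforgeability of the digital signatures---an ingredient with no counterpart in the purely combinatorial RPS case, and the reason the cryptographic assumptions are required. Once rank consistency is in hand, the remaining bookkeeping---including the self-protecting behavior by which a node that deviated detectably in an earlier round sets its own rank to \(\{1\}^{c\log n}\) (Algorithms~\ref{alg:gen_rand_ranks} and~\ref{alg:share_opp_rand}) and therefore can never win a Round 4 comparison---is a direct transcription of the Case 1 / Case 2 analysis above.
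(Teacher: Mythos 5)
Your proposal is correct and follows essentially the same route as the paper's proof: the same reduction to the two $-\infty$ configurations, the same inspection of where the algorithm permits outputs $0$ and $1$, and the same ``unique strict minimum rank'' argument for Case 2. The rank-consistency step you isolate (and the handling of ties via strict inequality) is treated only implicitly in the paper's proof, so your version is, if anything, slightly more careful on exactly the point where the cryptographic assumptions enter.
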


\begin{proof}
    A node \( i \) receives a utility less than zero (i.e., \( -\infty \)) only in two cases:
    \begin{enumerate}
        \item \( i \) outputs \( 0 \), and none of its neighbors also outputs \( 1 \), or
        \item \( i \) outputs \( 1 \), and at least one of its neighbors also outputs \( 1 \).
    \end{enumerate}

    \textbf{Case 1} is not possible under the algorithm: a node is instructed to output \( 0 \) only in Round 4, and only after observing at least one neighbor output \( 1 \). Hence, with any history where the node \(i\) is undecided in Round 3 outputs \( 0 \), it must be the case that a neighbor has already output \( 1 \). Therefore, the first case cannot occur.

    \textbf{Case 2} is also prevented by the algorithm: a node outputs \( 1 \) only in Round 3 and only if:
    \begin{itemize}
        \item No neighbor has output \( 1 \) or \( \bot \) in the previous rounds, and
        \item \( i \) has the least rank compared to all its undecided Neighbors
    \end{itemize}
    Because all Undecided neighbors follow the algorithm, the nodes with rank less than \(i\) will not output \( 1 \) -- they will either output \( 0 \) or \(\bot\) in later Rounds.

    Therefore, following the algorithm ensures that node \( i \)'s utility is never less than \( 0 \).
\end{proof}

\begin{theorem}\label{thm:crypto_eq}
Algorithm \ref{alg:crypto_dev} constitutes a \EQ.
\end{theorem}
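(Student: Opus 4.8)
The plan is to mirror the structure of the proof of Theorem~\ref{thm:rps_eq}, since the two algorithms share the same high-level defensive design: the \texttt{BeenCheated} flag acts as a credible retaliation device, and the two supporting lemmas (Lemma~\ref{lem:crypto_been_cheated} and Lemma~\ref{lem:crypto_stick_to_algo}) are the exact analogues of the RPS lemmas. Fix an arbitrary history $h$ with $i \in P(h)$, let $h_i = \mathrm{proj}_i(h)$, so $h \in I_i(h_i) \in \mathcal{I}_i$. I would show that the continuation strategy prescribed by Algorithm~\ref{alg:crypto_dev} from $I_i(h_i)$ is sequentially rational for node $i$ \emph{irrespective of $i$'s belief} over the histories in $I_i(h_i)$, assuming all other undecided nodes follow the algorithm from $h$ onward. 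Because the best-response argument will yield payoff-domination that holds pointwise for every global history consistent with $h_i$, it automatically holds for every belief system $\mu$, which is exactly what Definition~\ref{def:EQ} requires.

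The core of the proof is the same case split on $i$'s local situation at $h$. First I would classify deviations into \emph{undetectable} and \emph{detectable}, as in Theorem~\ref{thm:rps_eq}. Here the crucial claim is that the \emph{only} undetectable deviation affecting the MIS outcome is biasing the distribution of $i$'s random strings $r_{i\to i}$ and $r_{i \to j}$, and that this cannot help: because $i$'s rank $R_i = r_{i\to i} \oplus r_{\mathrm{opp}(i)\to i}$ is an XOR with a string $r_{\mathrm{opp}(i)\to i}$ that is uniform and signed (hence unforgeable and un-biasable by $i$), $R_i$ remains uniform regardless of how $i$ chooses its own bits; likewise $i$ cannot bias the rank of any neighbor it was asked to supply randomness for without being caught via the signature/consistency checks. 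This is the place where the cryptographic assumptions enter and where the argument genuinely diverges from the RPS case: I must invoke unforgeability of signatures to argue that $i$ cannot fabricate a valid signed $r_{\mathrm{opp}(i)\to i}$ it never received, and argue that forwarding an inconsistent or missing signed message is a detectable deviation that causes a neighbor to set \texttt{BeenCheated}. The remaining cases are then routine: (Case A) if a neighbor has already output $1$ or $\bot$, the max payoff is $0$ and Lemma~\ref{lem:crypto_stick_to_algo} gives at least $0$; (Case B) if an undecided neighbor has $\texttt{BeenCheated}=\texttt{true}$ because of $i$, Lemma~\ref{lem:crypto_been_cheated} caps $i$ at $0$ and the algorithm attains $0$; (Case C, no prior detection) a detectable deviation triggers a neighbor's flag and caps $i$ at $0$ by Lemma~\ref{lem:crypto_been_cheated}, while an undetectable one leaves the win probability (and hence expected payoff) unchanged, and voluntarily outputting $0$ or $\bot$ yields at most $0$.

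I expect the main obstacle to be making the cryptographic sub-argument fully rigorous while keeping the equilibrium statement exact rather than computational. The subtlety is that unforgeability holds only against computationally bounded players and only with overwhelming probability, so strictly speaking $i$ \emph{can} forge a signature with negligible probability, which technically permits a negligible deviation gain; the clean $-\infty$-penalty / exact-best-response framing used for RPS does not transfer verbatim. I would address this by either (i) working in an idealized model where signatures are perfectly unforgeable (so that any attempt to output $1$ with a fabricated rank is detectable with certainty and the exact argument goes through), or (ii) stating the equilibrium up to a negligible additive term and noting that since a detected deviation yields $0$ and honest play yields at least $0$ while the undetectable deviations give no improvement at all, no \emph{computationally efficient} strategy can gain more than a negligible amount. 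A secondary subtlety, flagged in the paper's own footnote, is the broadcast-only assumption: because $i$ can only broadcast, it cannot privately send a correct message to one neighbor and a corrupt one to another, which rules out the selective-equivocation deviations that would otherwise complicate the detectability classification; I would explicitly use this to argue that every message $i$ sends is seen consistently by all neighbors, so a malformed forward is caught by \emph{all} undecided neighbors at once.
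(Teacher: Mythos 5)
Your proposal follows essentially the same route as the paper's proof: the identical Case A/B/C decomposition driven by Lemmas~\ref{lem:crypto_been_cheated} and~\ref{lem:crypto_stick_to_algo}, with the same detectable/undetectable deviation split. If anything, your treatment is more careful than the paper's on the one genuinely crypto-specific point --- the paper's Case~C literally recycles the phrase ``Changing RPS mixing probabilities'' from Theorem~\ref{thm:rps_eq}, whereas you correctly spell out that $R_i = r_{i\to i}\oplus r_{\mathrm{opp}(i)\to i}$ stays uniform because $i$ commits to $r_{i\to i}$ before seeing $r_{\mathrm{opp}(i)\to i}$, and you flag the computational (negligible-probability) caveat on unforgeability that the paper silently idealizes away.
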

\begin{proof}
Consider an arbitrary history $h$ with $i \in P(h)$. Let $h_i$ denote the local history of node $i$, which implies $h \in I_i(h_i) \in \mathcal{I}_i$.  We show that the continuation strategy prescribed for $i$ by Algorithm~\ref{alg:crypto_dev} from information set $I_i(h_i)$ is sequentially rational irrespective of the \(i\)'s belief over $I_i(h_i)$, i.e., it gives the maximum expected payoff when all other nodes also follow Algorithm~\ref{alg:crypto_dev} from $h$. It follows that the strategy constitutes a sequential equilibrium for any belief system of \(i\). and it follows that the strategy constitutes a  \EQ.
As we mentioned in Section~\ref{sec:model}, we do not consider internal deviations; this indirectly means that node \(i\) knows exactly which line/round of the algorithm applies at \(h_i\).
We distinguish deviations as
\begin{itemize}
  \item \emph{Undetectable deviations:} for example, choosing internal randomness\(r_{i\rightarrow i},r_{i\rightarrow j}\) with nonuniform
probabilities. Since every sequence of ranks is still possible under the prescribed
uniform strategy (albeit with small probability), neighbors can never rule out a deviation
with certainty, even after many observations. Similarly, voluntarily outputting \(0\) or
\(\bot\) earlier than prescribed may also be indistinguishable, since it could be a valid
response to another neighbor having output \(1\), which remaining neighbors may not
observe.
  \item \emph{Detectable deviations:} e.g., failing to send a required/valid message,
  sending inconsistent messages, or outputting \(1\) when not eligible. Such deviations
  are caught by at least one neighbor, which sets its \texttt{BeenCheated} flag to \(1\).
\end{itemize}

\noindent
\textbf{Case analysis.}

\smallskip
\emph{Case A: At \(h\), \(i\) has a decided neighbor with output \(1\) or \(\bot\).}
Then the maximum attainable payoff for \(i\) is \(0\). By Lemma~\ref{lem:crypto_stick_to_algo},
following the algorithm yields a payoff of at least \(0\). Hence, no profitable deviation exists,
so the prescribed action is a best response.

\smallskip
\emph{Case B: At \(h\), \(i\) do not have a decided neighbor with output \(1\) or \(\bot\) and there exists an
undecided neighbor that \(i\) has cheated.}
If the remaining undecided neighbors follow the algorithm, Lemma~\ref{lem:crypto_been_cheated}
implies the most \(i\) can obtain is \(0\); following the algorithm also yields payoff \(0\).
Thus, no profitable deviation exists.

\smallskip
\emph{Case C: At \(h\), \(i\) do not have a decided neighbor with output \(1\) or \(\bot\) and \(i\) has not cheated
any undecided neighbor.}
\begin{itemize}
  \item \emph{Detectable deviation by \(i\):} Some neighbor detects it and sets
  \texttt{BeenCheated}\(=1\). By Lemma~\ref{lem:crypto_been_cheated}, the best payoff \(i\) can then secure is \(0\).
  By Lemma~\ref{lem:crypto_stick_to_algo}, sticking to the algorithm yields payoff at most \(0\) and at least \(0\),
  so no profitable deviation exists.
  \item \emph{Undetectable deviation by \(i\):} (1) Changing RPS mixing probabilities does not
  improve \(i\)'s win probability against neighbors who randomize uniformly; \(i\)'s chance
  of joining the MIS is unchanged, so expected payoff is unchanged. (2) Voluntarily outputting
  \(0\) or \(\bot\) can yield at most \(0\), whereas Lemma~\ref{lem:crypto_stick_to_algo} guarantees
  sticking yields payoff at least \(0\). Hence, no profitable deviation exists.
\end{itemize}

Therefore, we have shown that for every history \(h\) with \(i\in P(h)\),
the prescribed strategy for \(i\) based on \(h_i= \mathrm{proj}_i(h)\) gives maximum expected utility possible from \(h\) when all other node are following the algorithm from \(h\), which proves the theorem.
\end{proof}

We have shown that Algorithm \ref{alg:crypto_dev} forms a \EQ. To complete the proof of Theorem~\ref{thm:RPS_main_theorem}, we now assume that all nodes follow the algorithm without deviation. Under this honest execution, we analyze the simplified version of the algorithm (with deviation-handling logic removed) and proceed to prove its correctness, termination, and nonzero inclusion probability.

\subsection{Strategy Algorithm without Deviations}
Since we have established in Theorem~\ref{thm:crypto_eq} that Strategy Algorithm~\ref{alg:crypto_dev} constitutes a Belief-Independent Sequential Equilibrium, no node has a unilateral incentive to deviate. When no node deviates from the Strategy Algorithm~\ref{alg:crypto_dev}, its execution is equivalent to running Algorithm~\ref{alg: crypto_nodev}, which is simply Strategy Algorithm~\ref{alg:crypto_dev} with the deviation handling logic removed.

\begin{algorithm}
\caption{\label{alg: crypto_nodev} Rank based MIS strategy algorithm for node \(i\) under no deviations}
\vspace{1em}

\SetKwBlock{Init}{Initialization}{}
\Init{
    \texttt{UndecidedNeighbors} \( \gets N(i) \);
    \texttt{iteration\_number} \( \gets 0 \)
}
\If{\(N(i) = \emptyset\)}{
    Output \(1\)
}
\While{\(i\) is \emph{\textsc{Undecided}}}{
    \texttt{iteration\_number} \( \gets \texttt{iteration\_number} + 1 \)\;

    \tcc{Round 1: Opponent Selection}    
   Choose any \( k \in \) \texttt{UndecidedNeighbors} as the opponent \( \mathrm{opp}(i) \), and broadcast this choice to all neighbors.

    \tcc{Round 2: Priority Computation}
    
    Draw random \(c\log {n}\) bit  \( r_{i \rightarrow i} \in \{0,1\}^{c\log n} \) uniformly at  random and broadcast to all the neighbors\;  
        \ForEach{ \( j \in\) \emph{\texttt{UndecidedNeighbors}} which broadcasted \( \mathrm{opp}(j) = i \) in previous round}
        {
        Draw a random \(c\log {n}\) bit \( r_{i \rightarrow j} \in \{0,1\}^{c\log n}\) uniformly at random and
        broadcast signed message \( (\texttt{iteration\_number}, i, r_{i \rightarrow j}) \) to  all neighbors;
        }
        
    \tcc{Round 3: Forward Opponent's random input}

       \( R_{i} \gets \{r_{i\rightarrow i}\oplus r_{\mathrm{opp}(i)\rightarrow i}\}\)
        \newline
        Broadcast message \( (\texttt{iteration\_number}, \mathrm{opp}(i), r_{\mathrm{opp}(i) \rightarrow i}) \) signed by \(\mathrm{opp}(i)\) to all neighbors 
        
        \tcc{Round 4: Calculate Ranks of Neighbors} 
        
    \ForEach{\( j \in \) \emph{\texttt{UndecidedNeighbors}}}{
       
        \( R_j \gets \{r_{\mathrm{opp}(j) \rightarrow j} \oplus r_{j \rightarrow j} \}\)\; 
       
    }

    \If{\( R_{i} < R_j \) for all \( j \in\) \emph{\texttt{UndecidedNeighbors}}}{
        Output \( 1 \) (join MIS)\;
    }

    \tcc{Round 5: Reacting to Neighbor Decisions}
  \If{any \( j \in\) \emph{\texttt{UndecidedNeighbors}} has output \( 1 \)}{
       Output \(0\)
        }
    \Else{
        Remove all \( j \in \) \texttt{UndecidedNeighbors} who output \( 0 \)\;
        \If{\emph{\texttt{UndecidedNeighbors}} is empty}{
       Output \(1\) 
        }
    }
}
\end{algorithm}

We now proceed to prove the remaining claims of Theorem~\ref{thm:crypto_main} on this deviation-free execution: namely, correctness (that the output forms a valid MIS), termination (with high probability), and positive inclusion probability (every node has a nonzero chance of joining the MIS).

\begin{theorem}[Correctness and Termination] \label{thm:crypto_correctness_termination}
Algorithm \ref{alg: crypto_nodev} terminates in \( O(\log n) \) rounds with high probability, and the set of nodes that output \(1\) at termination forms a valid Maximal Independent Set (MIS).
\end{theorem}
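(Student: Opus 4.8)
The plan is to reduce the deviation-free execution to the random-rank model of M\'etivier et al.~\cite{Metivier}. The first step is to show that, in each iteration, the ranks $R_i = r_{i\rightarrow i}\oplus r_{\mathrm{opp}(i)\rightarrow i}$ of the undecided nodes are mutually independent and uniformly distributed over $\{0,1\}^{c\log n}$. The key observation is that every random string contributing to these ranks is distinct: the self-string $r_{i\rightarrow i}$ occurs only in $R_i$, and the dedicated string $r_{\mathrm{opp}(i)\rightarrow i}$ is indexed by the ordered pair $(\mathrm{opp}(i),i)$ and hence, since $i$ selects a unique opponent, also occurs only in $R_i$. (This remains true even when several nodes select the same opponent, since the opponent generates a separate fresh string for each of them.) As each $R_i$ is the XOR of two independent uniform strings, it is uniform, and the whole collection $\{R_i\}$ is independent. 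A union bound over the at most $\binom{n}{2}$ pairs then shows that all ranks are distinct with probability at least $1-n^{2-c}$ in a single iteration, and this persists across all $O(\log n)$ iterations for a suitably large constant $c$.

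Conditioned on distinct ranks, I would establish correctness directly from Algorithm~\ref{alg: crypto_nodev}. For \emph{independence}, a node $i$ outputs $1$ only when $R_i<R_j$ for every undecided neighbor $j$; since the ranks are distinct, the symmetric comparisons $R_i<R_j$ and $R_j<R_i$ cannot both hold, so no two adjacent undecided nodes output $1$ in the same iteration, and once $i$ outputs $1$ all its undecided neighbors are forced to output $0$ in Round~5. For \emph{maximality}, a node outputs $0$ only in Round~5 and only after observing a neighbor output $1$; combined with termination (so every node eventually outputs $0$ or $1$) and the fact that isolated nodes output $1$ immediately, this guarantees every $0$-node is adjacent to an MIS node. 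Hence the set of $1$-outputs is a valid MIS.

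For termination, the equivalence to the random-rank model lets me invoke the standard M\'etivier-style analysis: a node is removed from the graph whenever it is a strict local minimum or a neighbor of one, and the expected number of edges deleted per iteration is at least a constant fraction of the surviving edges. I would prove this by the usual charging argument: for each directed edge $(u,v)$, the event that $v$ attains the minimum rank over the combined neighborhood $N[u]\cup N[v]$ forces $v$ to be a local minimum, so $v$ joins the MIS and $u$ (together with all edges incident to $u$) is deleted. Summing the resulting contributions over all edges yields $\mathbb{E}[\#\text{edges removed}]\ge m/2$. Boosting to high probability then follows exactly as in the proof of Theorem~\ref{thm:rps_termination}: a Markov-type bound (Lemma~\ref{Markov}) shows each iteration removes a constant fraction of edges with constant probability, and a Chernoff bound (Lemma~\ref{chernoff}) over $\Theta(\log n)$ iterations drives the edge count to zero with high probability. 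As each iteration spans five rounds, the total round complexity is $O(\log n)$.

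I expect the main obstacle to be the termination analysis rather than the correctness claim. The delicate point is verifying the constant-fraction edge-removal bound despite the correlations potentially introduced by shared opponent randomness, and confirming that the discretized ranks in $\{0,1\}^{c\log n}$ reproduce the behaviour of the continuous-rank model of~\cite{Metivier} once distinctness is assumed. Establishing the mutual independence of the ranks in the first step is precisely what validates this reduction, so the bulk of the care should go there; given that, the remaining bound is essentially the classical M\'etivier argument adapted to our five-round iteration structure.
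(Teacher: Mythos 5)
Your proposal is correct and follows essentially the same route as the paper: reduce the deviation-free execution to the random-rank algorithm of M\'etivier et al., and handle the discretization to $c\log n$-bit ranks by a union bound over collisions across the $O(\log n)$ iterations. The only difference is one of detail -- you explicitly verify the mutual independence and uniformity of the XOR-ed ranks (a point the paper leaves implicit but which is indeed what licenses the reduction) and you re-derive the constant-fraction edge-removal bound via the standard charging argument plus the reverse-Markov/Chernoff boosting of Theorem~\ref{thm:rps_termination}, whereas the paper simply cites the M\'etivier analysis as a black box.
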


\begin{proof}
Algorithm \ref{alg: crypto_nodev} simulates the algorithm by Métivier et al.~\cite{Metivier} through verifiable random priority generation and decision-making rounds.

Each iteration of Algorithm \ref{alg: crypto_nodev} consists of five communication rounds. This five-round process mirrors one round of the algorithm by Métivier et al.~\cite{Metivier}. Since that algorithm terminates in \( O(\log n) \) rounds with high probability and outputs a correct MIS, our algorithm also inherits these guarantees. Thus, Algorithm~\ref{alg: crypto_nodev} terminates in \( 5 \times O(\log n) \) iterations (i.e., \( O(\log n) \) rounds asymptotically). However, an additional argument is needed: in the original algorithm by Métivier et al.~\cite{Metivier}, the probability that two nodes pick the same rank is negligible due to the use of real-valued randomness. In contrast, in our algorithm, each node generates a rank using \( c \log n \) random bits, so the probability that any two nodes pick the same rank is non-negligible.

In each round, every node generates a rank consisting of \( c \log n \) random bits. The total number of possible ranks is \( 2^{c \log n} = n^c \). By the union bound, the probability that any two nodes pick the same rank in a given round is at most
\(
\binom{n}{2} \cdot \frac{1}{n^c} = O\left(\frac{1}{n^{c - 2}}\right).
\)
When this process is repeated independently for \( O (
\log n) \) rounds, the probability that any two nodes collide in at least one of the rounds is bounded by
\(
O\left( \frac{\log n}{n^{c - 2}} \right),
\)
which is negligible for any sufficiently large constant \( c > 2 \). Thus, with high probability, no two nodes ever pick the same rank across all rounds.

Since the original algorithm with real-valued ranks succeeds with probability \( 1 - 1/\mathrm{poly}(n)\), the bit-based version succeeds with probability at least
\(
1 - 1/\mathrm{poly}(n)\).

\end{proof}

\begin{theorem}[Positive Probability of MIS Inclusion]\label{thm:crypto_ppi}
Under Algorithm \ref{alg: crypto_nodev}, every node has a strictly positive probability of eventually outputting \(1\) (i.e., joining the MIS) in each round in which none of its neighbors has output \(1\).
\end{theorem}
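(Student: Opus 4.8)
The plan is to show that whenever node $i$ is undecided and none of its neighbors has yet output $1$, there is a strictly positive probability that $i$ computes the strictly smallest rank among all its currently undecided neighbors in the \emph{current} iteration, and therefore outputs $1$ in Round~4. Since the event ``$i$ joins the MIS this iteration'' is a subevent of ``$i$ eventually outputs $1$,'' a positive lower bound on the former suffices.

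First I would fix an arbitrary iteration in which $i$ is undecided and no neighbor of $i$ has output $1$, and recall that under Algorithm~\ref{alg: crypto_nodev} node $i$'s rank is $R_i = r_{i\to i}\oplus r_{\mathrm{opp}(i)\to i}$ and each undecided neighbor $j$'s rank is $R_j = r_{j\to j}\oplus r_{\mathrm{opp}(j)\to j}$. The key observation is that every rank $R_k$ is obtained by XORing the uniformly random self-string $r_{k\to k}\in\{0,1\}^{c\log n}$ (generated by $k$ independently in Round~2) with a string $r_{\mathrm{opp}(k)\to k}$ that is determined by the opponent and is independent of $r_{k\to k}$. Consequently, conditioned on all the opponent-supplied strings, each $R_k$ is still uniformly distributed over $\{0,1\}^{c\log n}$, because XORing a uniform independent string with any fixed string yields a uniform string. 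Hence each rank is marginally uniform on a set of size $n^c$.

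Next I would lower-bound the probability that $R_i$ is strictly less than $R_j$ for every undecided neighbor $j$. The cleanest route is: the probability that $i$ attains the unique minimum rank among the at most $\Delta+1 \le n$ undecided nodes in its closed neighborhood is at least $1/n$ minus the collision probability, and by the collision analysis in Theorem~\ref{thm:crypto_correctness_termination} the chance that any two of these nodes share a rank in this iteration is at most $\binom{n}{2}n^{-c}=O(n^{2-c})$, which is $o(1)$ for $c>2$. Thus with probability at least $1/n - O(n^{2-c}) > 0$, node $i$ has the strictly smallest rank among its undecided neighbors and therefore outputs $1$. This gives a strictly positive per-iteration inclusion probability whenever no neighbor of $i$ has yet output $1$, which is exactly the claim.

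The main obstacle I anticipate is the potential correlation introduced by the opponent-sharing structure: the strings $r_{\mathrm{opp}(j)\to j}$ appearing in different neighbors' ranks are not all independent (for instance two neighbors could share an opponent, or $i$ itself could be some neighbor's opponent), so the ranks $\{R_j\}$ are not mutually independent, and one cannot simply multiply per-pair win probabilities. I would therefore avoid any independence-of-ranks assumption and instead argue purely from the \emph{marginal} uniformity of $R_i$ together with a union bound over the ``bad'' events $\{R_j \le R_i\}$, since union bounds require no independence. Concretely, $\Pr[\exists j:\, R_j \le R_i] \le \sum_j \Pr[R_j \le R_i]$, and each term can be bounded using marginal uniformity of the pair $(R_i,R_j)$, giving a strictly positive lower bound on $\Pr[R_i < R_j \ \forall j]$. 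Care is needed because $R_i$ and $R_j$ may themselves be correlated when $j=\mathrm{opp}(i)$ or $i=\mathrm{opp}(j)$; in those cases the relevant pair still has enough fresh independent randomness (the self-strings) to ensure $\Pr[R_j \le R_i]$ is bounded away from $1$, which is all the union bound needs.
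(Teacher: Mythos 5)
Your proposal contains a genuine gap in the step you actually commit to. After (rightly) worrying about correlations among the ranks, you retreat to the bound
\(
\Pr[\exists j:\, R_j \le R_i] \le \sum_j \Pr[R_j \le R_i],
\)
with each summand controlled only through marginal uniformity of the pair \((R_i,R_j)\). But each such pairwise term is about \(\tfrac12\) (two independent-ish uniform strings, plus a tiny tie probability), so the sum exceeds \(1\) as soon as node \(i\) has two or more undecided neighbors, and the union bound yields a non-positive, hence vacuous, lower bound on \(\Pr[R_i < R_j\ \forall j]\). The union bound is the wrong tool here precisely because the event you need is an intersection of \(d_i\) events each of probability about \(\tfrac12\); no amount of care about which pairs are correlated fixes this.

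The irony is that you already state the observation that repairs the proof: conditioned on all opponent-supplied strings, each rank \(R_k = r_{k\to k}\oplus r_{\mathrm{opp}(k)\to k}\) is uniform because the self-string \(r_{k\to k}\) is fresh. Push this one step further: the self-strings \(\{r_{k\to k}\}_k\) are \emph{mutually} independent and each appears in exactly one rank, so conditioned on all cross-strings the ranks in the closed neighborhood of \(i\) are mutually independent uniforms, and since this conditional law does not depend on the conditioning, they are unconditionally i.i.d. uniform on \(\{0,1\}^{c\log n}\). Your feared correlation does not exist. With i.i.d. ranks either of your first-paragraph routes goes through, e.g. \(\Pr[R_i=0^{c\log n}\text{ and }R_j\neq 0^{c\log n}\ \forall j] = n^{-c}(1-n^{-c})^{d_i}>0\), or the symmetry bound of roughly \(1/(d_i+1)\) minus the collision probability. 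This is essentially what the paper does: its proof simply asserts that the priorities are independent and identically distributed and concludes a win probability of \(1/d_i\), without spelling out the fresh-self-string argument that justifies independence (a point your write-up correctly flags as needing justification, even though your proposed substitute does not work).
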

\begin{proof}
Consider an undecided node \(i\). Its priority is computed as
\(
R_i = r_{i \rightarrow i} \oplus r_{\mathrm{opp}(i) \rightarrow i},
\)
where both random values are drawn independently and uniformly from 
\(\{0,1\}^{c \log n}\). The priorities of all neighbors are defined analogously, so that they are independent and identically distributed over \(\{0,1\}^{c \log n}\).

Hence, the probability that \(R_i\) is strictly smaller than the priorities of all its active neighbors is \(1/d_i\) where \(d_i\) is the number of undecided neighbors of \(i\). Therefore, as long as none of its neighbors has the output \(1\), node \(i\) always has a strictly positive probability of joining the MIS.

\end{proof}

\section{Conclusion and Future Work}
\label{sec:conclusion}
We proposed a model that incorporates rational behavior in distributed message passing algorithms. We considered a utility function, where nodes are incentivized to compute a correct solution and prefer to be included in the MIS. We designed two algorithms that are resilient to rational behavior: no node has an incentive to deviate from the prescribed algorithm. Additionally, assuming that no node deviates from the algorithm, we can guarantee correctness, termination, and positive probability of inclusion.

However, each algorithm comes with certain drawbacks. The rank-based strategy algorithm relies on cryptographic assumptions, which implicitly restrict agents to bounded computational power -- a non-standard assumption in the context of distributed algorithms. Moreover, this algorithm may use large messages; a node that becomes the opponent of $k$ neighbors will send \(O(k \log n)\)-bit messages over each incident edge. On the other hand, the RPS-based strategy algorithm avoids cryptographic assumptions and sends at most \(O(\log n)\)-bit messages over each edge per round. But its round complexity has exponential dependence on the maximum degree of the network.

While our algorithms are resilient to unilateral deviations, they do not prevent coordination or implicit collusion among multiple nodes. For example, once a node realizes that it cannot join the MIS (because one of its neighbors has already output \(1\)), it may still have an incentive to continue participating in the algorithm in a way that strategically reduces the chances of its neighbors, thereby improving the inclusion probability of the nodes with which it is colluding. Additionally, the two algorithms presented are somewhat sensitive to the assumptions about the underlying model and utility structure. If the utilities are perturbed even slightly, our equilibrium guarantees may not hold. For instance, if the utility of a node in a locally valid solution where it belong to the MIS is negative, our algorithms no longer ensure equilibrium. 
An important direction for future work, therefore, is to design algorithms that remain robust under a broader class of utility functions, collusion, or to establish \emph{impossibility results} that characterize the precise limitations of such robustness.

\appendix
\section{Useful Tail Inequalities}
\label{sec:gt-prelims}
Following are two tail inequalities we use in our analysis of the running time of the RPS based MIS algorithm.

\begin{lemma}[Reverse Markov Inequality]\label{Markov}
Let \( Y \) be a real-valued random variable such that \( Y \leq a \) almost surely, for some constant \( a \in \mathbb{R} \), and let \( d < \mathbb{E}[Y] \). Then,
\[
\Pr\left(Y \geq d \right) \geq \frac{\mathbb{E}[Y] - d}{a - d}.
\]
\end{lemma}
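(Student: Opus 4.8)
The plan is to reduce this to the ordinary Markov inequality by passing to the nonnegative random variable $a - Y$. Since $Y \le a$ almost surely, the variable $W := a - Y$ satisfies $W \ge 0$ almost surely, so standard Markov applies to $W$. First I would record the elementary but necessary fact that $a - d > 0$: because $Y \le a$ almost surely we have $\mathbb{E}[Y] \le a$, and combined with the hypothesis $d < \mathbb{E}[Y]$ this gives $d < a$, so the denominator $a - d$ appearing in the claimed bound is strictly positive and the division is legitimate.

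Next I would translate the target event. Observe that $Y < d$ is equivalent to $W = a - Y > a - d$, hence $\Pr(Y < d) \le \Pr(W \ge a - d)$. Applying Markov's inequality to the nonnegative variable $W$ with threshold $a - d > 0$ yields $\Pr(W \ge a - d) \le \mathbb{E}[W]/(a-d) = (a - \mathbb{E}[Y])/(a - d)$. Taking complements then gives
\[
\Pr(Y \ge d) = 1 - \Pr(Y < d) \ge 1 - \frac{a - \mathbb{E}[Y]}{a - d} = \frac{\mathbb{E}[Y] - d}{a - d},
\]
which is exactly the claimed inequality.

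As an alternative (and arguably cleaner) route I would split the expectation directly: writing $\mathbb{E}[Y] = \mathbb{E}[Y\,\mathbf{1}_{Y \ge d}] + \mathbb{E}[Y\,\mathbf{1}_{Y < d}]$, bounding the first term by $a\,\Pr(Y \ge d)$ using $Y \le a$, and bounding the second by $d\,\Pr(Y < d)$ using the pointwise inequality $Y\,\mathbf{1}_{Y<d} \le d\,\mathbf{1}_{Y<d}$, one obtains $\mathbb{E}[Y] \le d + (a-d)\,\Pr(Y \ge d)$; rearranging with $a - d > 0$ gives the result immediately. I expect no genuine obstacle here: the only points requiring care are fixing the correct direction of the inequality and verifying that $a - d > 0$, both handled above, so the lemma is a routine one-line consequence of Markov's inequality once the right nonnegative variable is identified.
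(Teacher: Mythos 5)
Your proof is correct. The paper states this lemma in the appendix without any proof (it is a standard tail bound), so there is nothing to compare against; both of your routes --- applying Markov's inequality to the nonnegative variable $a - Y$, or splitting $\mathbb{E}[Y]$ over the events $\{Y \ge d\}$ and $\{Y < d\}$ --- are valid and complete, and your observation that $a - d > 0$ follows from $d < \mathbb{E}[Y] \le a$ is exactly the point that needs checking.
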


\begin{lemma}[Chernoff Bound: Lower Tail]\label{chernoff}
Let \( X_1, X_2, \dots, X_n \) be independent Bernoulli random variables taking values in \(\{0,1\}\), and let
\(
X = \sum_{i=1}^n X_i\), and \(\mu = \mathbb{E}[X]
\).
Then, for any \( 0 < \delta < 1 \), the following inequality holds:
\[
\Pr(X \leq (1 - \delta)\mu) \leq \exp\left( -\frac{\delta^2 \mu}{2} \right)
\]
\end{lemma}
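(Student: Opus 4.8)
The plan is to use the standard \emph{Chernoff method}: bound the one-sided tail by an exponential moment, exploit independence to factorize, and then optimize a free parameter. Since we are after a \emph{lower} tail, I would fix a parameter \(t > 0\) and rewrite \(\Pr(X \le (1-\delta)\mu)\) as \(\Pr\bigl(e^{-tX} \ge e^{-t(1-\delta)\mu}\bigr)\). Because \(e^{-tX} \ge 0\), Markov's inequality (the forward version, not the reverse form of Lemma~\ref{Markov}) then yields \(\Pr(X \le (1-\delta)\mu) \le e^{t(1-\delta)\mu}\,\mathbb{E}[e^{-tX}]\). The sign choice here is the one subtlety distinguishing the lower tail from the upper tail: we must use \(e^{-tX}\) with \(t>0\) so that the event \(X\) small becomes \(e^{-tX}\) large.

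Next I would control the moment generating function. By independence of the \(X_i\), we have \(\mathbb{E}[e^{-tX}] = \prod_{i=1}^n \mathbb{E}[e^{-tX_i}]\), and for a single Bernoulli variable with \(\Pr(X_i=1)=p_i\) one computes \(\mathbb{E}[e^{-tX_i}] = 1 + p_i(e^{-t}-1)\). Applying the elementary inequality \(1 + x \le e^{x}\) with \(x = p_i(e^{-t}-1)\) and summing the exponents (using \(\sum_i p_i = \mu\)) gives \(\mathbb{E}[e^{-tX}] \le \exp\bigl(\mu(e^{-t}-1)\bigr)\). Substituting back produces the parametrized bound \(\Pr(X \le (1-\delta)\mu) \le \exp\bigl(\mu\,[\,(1-\delta)t + e^{-t} - 1\,]\bigr)\). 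I would then optimize over \(t\): differentiating the bracketed exponent and setting the derivative to zero gives \(e^{-t} = 1-\delta\), i.e.\ the optimal choice \(t = \ln\frac{1}{1-\delta} > 0\), which is legitimate precisely because \(0 < \delta < 1\). Plugging this in collapses the exponent to \(\mu\bigl[-\delta - (1-\delta)\ln(1-\delta)\bigr]\), yielding the tight raw form \(\Pr(X \le (1-\delta)\mu) \le \bigl(e^{-\delta}(1-\delta)^{-(1-\delta)}\bigr)^{\mu}\).

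The main obstacle—and the only step that is not pure bookkeeping—is converting this tight but unwieldy expression into the clean target \(e^{-\delta^2\mu/2}\). This reduces to the single-variable inequality \((1-\delta)\ln(1-\delta) \ge \delta^2/2 - \delta\) on \((0,1)\). I would prove it by defining \(g(\delta) = (1-\delta)\ln(1-\delta) + \delta - \delta^2/2\), checking \(g(0)=0\) and computing \(g'(\delta) = -\ln(1-\delta) - \delta\) so that \(g'(0)=0\), and then observing that \(g''(\delta) = \delta/(1-\delta) \ge 0\) on \([0,1)\). Convexity then forces \(g'\) to be nondecreasing, hence \(g' \ge 0\), and therefore \(g \ge 0\) throughout \((0,1)\); this is exactly the required inequality. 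Combining it with the raw form gives \(\Pr(X \le (1-\delta)\mu) \le e^{-\delta^2\mu/2}\), completing the proof.
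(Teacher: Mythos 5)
Your proof is correct and complete: the exponential-moment (Chernoff) method with the substitution $e^{-tX}$, the bound $1+x\le e^x$ on the Bernoulli MGF, the optimal choice $t=\ln\frac{1}{1-\delta}$, and the convexity argument showing $(1-\delta)\ln(1-\delta)\ge \delta^2/2-\delta$ on $(0,1)$ all check out. Note that the paper itself gives no proof of this lemma --- it is stated in the appendix as a standard tail inequality used as a black box --- so there is nothing to compare against; your argument is the textbook derivation and fills that gap correctly.
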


\end{document}